\documentclass[american,aps,prx,reprint,superscriptaddress, longbibliography,floatfix]{revtex4-2}
\usepackage[colorlinks=true, citecolor=red, urlcolor=blue ]{hyperref}
\usepackage{fullpage}
\usepackage{amsmath}
\usepackage{amssymb}
\usepackage{mathtools}
\usepackage{xcolor}
\usepackage{braket}
\usepackage{amsfonts}
\usepackage{amsthm}
\DeclareMathOperator{\tr}{tr}

\newtheorem{thm}{Theorem}
\newtheorem{corol}{Corollary}

\newtheorem{observation}{Observation}

\usepackage{array}
\usepackage{xcolor}
\usepackage{graphics}
\usepackage{tikz}

\newcommand{\ketbra}[2]{|#1\rangle \langle #2|}

\DeclarePairedDelimiter\abs{\lvert}{\rvert}%
\DeclarePairedDelimiter\norm{\lVert}{\rVert}%

\makeatletter
\let\oldabs\abs
\def\abs{\@ifstar{\oldabs}{\oldabs*}}
\let\oldnorm\norm
\def\norm{\@ifstar{\oldnorm}{\oldnorm*}}
\makeatother

\begin{document}


\title{Remote predictability of quantum measurement outcomes}

\author{Chirag Srivastava}
\affiliation{Institute of Informatics, National Quantum Information Centre, Faculty of Mathematics, Physics and Informatics, University of Gda\'nsk, Wita Stwosza 57, 80-308 Gda\'nsk, Poland}

\author{Aparajita Bhattacharyya}
\affiliation{Harish-Chandra Research Institute, A CI of Homi Bhabha National Institute, Chhatnag Road, Jhunsi, Prayagraj 211 019, India}

\author{Sheikh Parvez Mandal}\affiliation{Departamento de Física---CIOyN, Universidad de Murcia, E-30071 Murcia, Spain}
\author{Mahasweta Pandit}\affiliation{Departamento de Física---CIOyN, Universidad de Murcia, E-30071 Murcia, Spain}

\author{Ujjwal Sen}\affiliation{Harish-Chandra Research Institute, A CI of Homi Bhabha National Institute, Chhatnag Road, Jhunsi, Prayagraj 211 019, India}

\begin{abstract}
Predicting quantum measurement outcomes for a local observer makes sense before, but not after, the measurement. However, for a remote observer, predicting measurement outcomes even after measurement remains a valid question. We define remote predictability as the degree to which one observer can predict a measurement
outcome of a spatially separated observer, given full knowledge of the shared quantum state and
measurement setting. 
We show that the remote predictability before and after the measurement remains the same for product states, whereas it increases for all pure entangled states and for some classically correlated states. 
Perfect remote predictability for arbitrary projective measurements occurs only for maximally entangled states among all pure states, underscoring their special role. Comparing pure entangled states with their dephased versions, we find that dephasing on one subsystem can enhance remote predictability for a broad class of states and measurements - a counterintuitive, noise-induced advantage that vanishes for maximally entangled states under any projective measurement.

\end{abstract}

\maketitle
\section{Introduction}
Quantum theory is a well-established theoretical framework that successfully explains a wide range of nonclassical phenomena and paves the way for many quantum information processing tasks. The emergence of such phenomena and tasks can be attributed to the specific quantum states and measurements permitted within the formal structure of the theory. For instance, entangled quantum states~\cite{horodecki09,Guhne09}  are necessary for phenomena like Bell nonlocality~\cite{Bell64,bell_aspect_2004,bell-review} and quantum steering~\cite{Schrödinger_1935,Schrödinger_1936,Wiseman07,Uola20}. They also play a crucial role in numerous quantum information processing tasks, including quantum key distribution \cite{Ekert91, GRTZ02}, quantum teleportation \cite{Teleportation, Hu2023}, dense coding \cite{Densed1, SBDS19, Guo2019}, and device-independent randomness extraction \cite{Pironio2010, Christensen2013}.

A fundamental feature of quantum theory is that measurement outcomes are generally probabilistic. This leads to the idea of predictability of measurement outcomes before a measurement, based on their probability distribution. Predictability is also a quantum resource and has been used in wave-particle complementarity relations~\cite{GREENBERGER88, Basso2022}.  For single systems or systems without spatial separation between their subsystems, the outcome is completely known after a measurement; hence, the predictability of outcomes is trivial after the measurement. However, a different situation arises in a bipartite setup with spatially separated subsystems where the predictability holds non-trivial meaning both before and after the measurement. Consider two spatially separated parties, Alice and Bob, sharing a quantum state. Alice measures her part of the shared state, and Bob has complete knowledge of the shared quantum state and Alice’s measurement setting. Clearly, before Alice's measurement, the predictability for both parties is the same; however, after the measurement, Bob cannot know the outcome as Alice does, and therefore it is nontrivial to ask about Bob's prediction. 
Thus, in this work, we analyze \textit{remote predictability}, which captures the ability of one observer to infer the outcomes of measurements performed by a spatially separated observer. The (un)predictability of the measurement outcome of another party has been captured in only a limited amount of work~\cite{Martinez26, Berta10}. In~\cite{Martinez26}, the prediction of measurement outcomes is analyzed for measurements performed on two-qubit states using error measures from statistical learning theory. In contrast, the present work establishes the corresponding results for general finite-dimensional bipartite quantum states using state discrimination techniques. Similarly,~\cite{Berta10} investigates the uncertainty associated with the outcomes of two distinct measurement settings, whereas our focus is on the optimal prediction of the outcome of a \emph{single} measurement setting.

We observe that for a shared product state, the remote predictability remains the same before and after the measurement. However, this limitation can be overcome when the parties share either an arbitrary pure entangled state or some classically correlated states. The higher values of remote predictability for a wider range of measurement settings indicate a stronger correlation-generating strength of a quantum state. We show that perfect remote predictability can always be attained when an arbitrary projective measurement is performed on a subsystem in a maximally entangled state. This remarkable feature vanishes for arbitrary projective measurements on one-half of partially entangled states, highlighting the optimality of maximally entangled states. While it is natural to expect entanglement to play a key role in enabling remote prediction of measurement outcomes, it remains unclear whether entangled states always outperform classically correlated ones. To address this, we compare two scenarios: one involving a pure entangled state and another in which the same state undergoes dephasing~\cite{Devetak05, Breuer07, Wiseman09} on Alice’s subsystem. See Fig. \ref{fig:schematic} for the schematic of the scenarios. 
Surprisingly, our results show that, for a broad class of entangled states and certain measurement choices, the dephased (noisy) setup can outperform the noiseless case in terms of remote predictability.  However, this advantage disappears when the shared state is maximally entangled, and Alice performs any projective measurement. 
In this sense, maximally entangled states retain a fundamental advantage over a large set of pure entangled states with respect to remote predictability. 

\section{Remote predictability}
\label{sec:nonlocal_predictability}
Consider a scenario of two spatially separated parties, Alice and Bob, such that Bob prepares a composite system $AB$ in a quantum state, $\rho_{AB}\in \mathbb{C}^{d_A}\otimes\mathbb{C}^{d_B}$, sends subsystem $A$ to Alice, and keeps subsystem $B$ with himself. 
Alice performs a measurement on her share of the bipartite state after receiving the subsystem.  Let her measurement has $n \geq 2$ outcomes and is defined by a set of $n$ operators $\{\pi_i\geq0\}_{i=0}^{n-1}$ such that $\sum_i \pi_i =\mathbb{I}$, where $\mathbb{I}$ is an identity operator. 
In this setup, we assume that Alice's measurement setting is trusted (fixed and will not be altered) and also known to Bob. See Fig. \ref{fig:schematic}(a) for a schematic of the protocol. This measurement by Alice creates an ensemble, $\mathcal{E}_{\mathcal{\rho},\pi}$, at Bob's disposal, i.e.,
\begin{eqnarray}
\label{aisa}
   \mathcal{E}_{\mathcal{\rho},\pi}&=&\left\{p_i,\rho^i_B\right\}_{i=0}^{n-1},
\end{eqnarray}
where $p_i=\tr(\pi_i \otimes \mathbb{I}_B \rho_{AB})$, $\rho^i_{B}=\frac{\tr_{A}(\pi_i \otimes \mathbb{I}_{B} \rho_{AB})}{p_i}$, and $\mathbb{I}_X$ denotes the identity operator in the space of a system $X$. This means that Bob's subsystem is in state $\rho^i_B$ with probability $p_i$. This implies that Bob can guess Alice's outcome better if the states of the ensemble are more distinguishable. We refer to Bob's ability to predict the measurement outcome of Alice as the remote predictability. Thus, the remote predictability, $\mathcal{N}_n$, can be quantified by the success probability of minimum-error state discrimination of the states in the ensemble $\mathcal{E}_{\mathcal{\rho},\pi}$~\cite{helstrom1,Me2,Me4,Me6,Me7,Me12,Bae_2015,Halder20,Me16}, i.e.,
\begin{eqnarray}\label{eq:2}
    \mathcal{N}_n(\rho,\pi)&=&\max_{\{M_i\}}\sum_{i=0}^{n-1} p_i\tr(M_i \rho^i_B), \nonumber \\
    &=&\max_{\{M_i\}}\sum_{i=0}^{n-1} \tr\{M_i \tr_{A}(\pi_i \otimes \mathbb{I}_{B} \rho_{AB})\}
\end{eqnarray}
where the set of operators $\{M_i\geq 0\}_{i=0}^{n-1}$ such that $\sum_i M_i=\mathbb{I}_B$ is a measurement performed by Bob. The algebraic maximum that $\mathcal{N}_n$ can achieve is 1 because, from \eqref{eq:2}, we can see that it is the maximum of the convex combination of the probabilities $\tr(M_i\rho^i_B)$. We denote  $\mathcal{N}_n=1$ as the \emph{perfect} remote predictability. Note that $\mathcal{N}_n$ is invariant under any isometry on subsystem $B$. For the case where Alice performs a two-outcome measurement, Bob will have an ensemble of two states, and then the remote predictability is given by the Helstrom bound~\cite{helstrom1},
\begin{equation}
    \mathcal{N}_2=\frac{1+\Delta}{2}
\end{equation}
where $\Delta=\tr|p_1\rho^1_B-p_2\rho^2_B|$ and $\tr|O| = \tr(\sqrt{O^{\dagger} O})$ for any operator $O$. Since $\mathcal{N}_2$ is a linear and increasing function of $\Delta$, we will use $\Delta$ as a measure of remote predictability in the scenario when Alice performs a two-outcome measurement. 

\begin{figure}
    \centering
        \includegraphics[ trim= 5cm 17.5cm 8.4cm 4cm, clip, width=.9\linewidth]{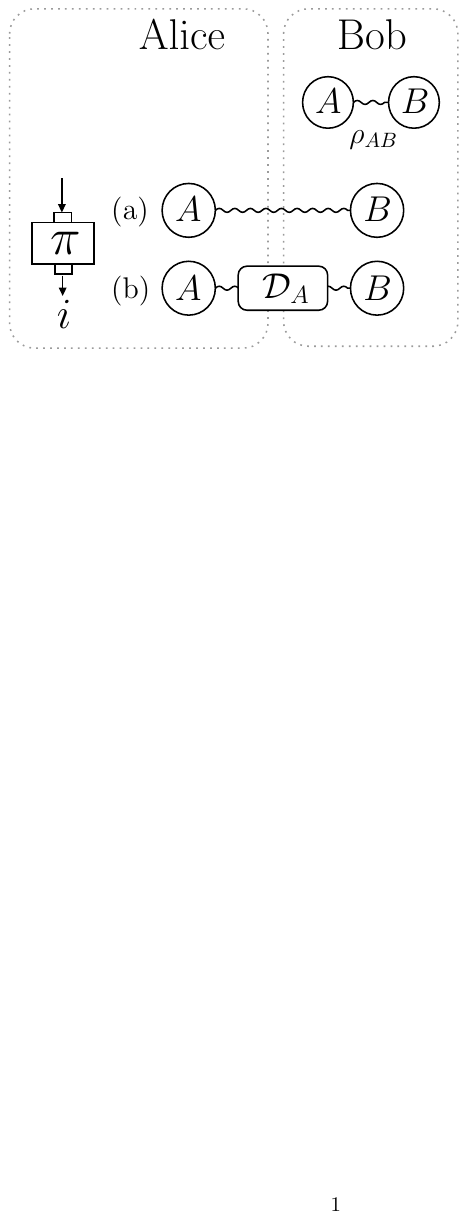}
    \caption{A schematic of the considered protocol. Bob prepares a state $\rho_{AB}$ and sends subsystem $A$ to Alice: (a) through a noiseless channel (Section~\ref{sec:nonlocal_predictability}), or (b) through a dephasing channel $\mathcal{D}_A$ (Section~\ref{sec:dephasing}). We investigate how well Bob can predict the outcomes of a measurement $\pi$ performed by Alice on $A$.
    }
    \label{fig:schematic}
\end{figure}

The remote predictability before measurement must correspond to the best guess of Bob about Alice's measurement outcomes before Alice performs the measurement. Given that Bob knows the shared state and the measurement setting of Alice, the remote predictability before the measurement is bounded by the probability of the most probable outcome of Alice's measurement.  Let that bound be denoted by $\mathcal{N}^{\mathrm{bef}}_n$, then
\begin{equation}
\mathcal{N}^{\mathrm{bef}}_n= \max_i\tr(\pi_i\otimes\mathbb{I}_B\rho_{AB}).
\end{equation}
Any violation of this bound suggests an intrinsic increase in the remote predictability after the measurement. That means an increase in Bob's ability to guess the outcomes of the measurement on the spatially separated system after the measurement is performed. Notice that this violation increases the guessing ability of Bob; however, based on his guess,  he cannot conclude whether the measurement by Alice has been performed or not. This is because the joint probability distribution of Alice and Bob's measurement outcomes is estimated on a quantum state, and the correlations in quantum theory satisfy the no-signaling principle~\cite{bell-review}.
For the scenario where the shared state is a product, i.e., $\rho_{AB}=\rho_A\otimes \rho_B$, Bob's remote predictability after the measurement is given by 
\begin{equation}
  \mathcal{N}_n= \max_{\{M_i\}}\sum_{i=0}^{n-1}\tr(\pi_i\rho_A)\tr\left(M_i\rho_B\right).
\end{equation}
 Now, since $\{\tr(M_i\rho_B)\}_i$ forms a probability distribution, it is easy to see that
\begin{eqnarray}\label{zaroor}
    \mathcal{N}_n&&\leq\max_{i}\tr(\pi_i\rho_A) \nonumber \\
    &&= \mathcal{N}^{\mathrm{bef}}_n.
\end{eqnarray}
Thus, Bob will not be able to guess Alice's measurement outcomes better even after Alice's measurement when they share a product state. However, in the following observation, we show that by sharing either an entangled state or a classically correlated separable state, it is possible to increase the remote predictability.  
\begin{observation}
    All pure entangled states and some classically correlated states increase the remote predictability after the measurement.
\end{observation}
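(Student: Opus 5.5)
The plan is to prove both parts of the observation by explicit construction. For each state in question, I will choose a measurement for Alice whose outcome leaves Bob holding mutually orthogonal conditional states. Then minimum-error discrimination in \eqref{eq:2} succeeds with certainty, so $\mathcal{N}_n=1$, while $\mathcal{N}^{\mathrm{bef}}_n<1$. I read ``increase the remote predictability'' in the existential sense: for the given state there is at least one measurement setting $\pi$ with $\mathcal{N}_n(\rho,\pi)>\mathcal{N}^{\mathrm{bef}}_n$. This matches the product-state comparison in \eqref{zaroor}, which rules out such an increase for every measurement.

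For pure entangled states, write $\ket{\psi}_{AB}=\sum_{k}\sqrt{\lambda_k}\ket{k}_A\ket{k}_B$ in Schmidt form. Entanglement means the Schmidt rank is at least two, so $0<\lambda_k<1$ for every $k$ with $\lambda_k>0$. Let Alice perform the two-outcome projective measurement
\[
\pi_0=\ketbra{0}{0},\qquad \pi_1=\mathbb{I}_A-\ketbra{0}{0},
\]
where $\ket{0}$ is a Schmidt vector with $\lambda_0>0$. A direct computation gives the ensemble \eqref{aisa} as
\[
p_0=\lambda_0,\quad \rho_B^0=\ketbra{0}{0},\qquad p_1=1-\lambda_0,\quad \rho_B^1=\frac{1}{1-\lambda_0}\sum_{k\neq 0}\lambda_k\ketbra{k}{k}.
\]
These two states have orthogonal supports, so $\Delta=\tr|p_0\rho_B^0-p_1\rho_B^1|=p_0+p_1=1$, and hence $\mathcal{N}_2=1$. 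In contrast, $\mathcal{N}^{\mathrm{bef}}_2=\max(\lambda_0,1-\lambda_0)<1$, which gives the strict increase. The same argument works with Alice measuring in the full Schmidt basis: Bob's states are then $\ketbra{k}{k}$, so $\mathcal{N}_n=1$, while $\mathcal{N}^{\mathrm{bef}}_n=\max_k\lambda_k<1$.

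For classically correlated states, one example suffices. Take $\rho_{AB}=\frac12(\ketbra{00}{00}+\ketbra{11}{11})$ and let Alice measure $\{\ketbra{0}{0},\ketbra{1}{1}\}$. Bob's conditional states are $\ketbra{0}{0}$ and $\ketbra{1}{1}$, each with probability $\frac12$. They are orthogonal, so $\mathcal{N}_2=1>\frac12=\mathcal{N}^{\mathrm{bef}}_2$. The word ``some'' is justified because product states are themselves classically correlated, and \eqref{zaroor} shows that they never give an increase.

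No step here is technically hard. The only real obstacle is conceptual: making precise that the claim is existential over Alice's measurement, and checking that the chosen measurement does not yield a trivial distribution. A trivial distribution would make $\mathcal{N}^{\mathrm{bef}}_n=1$, and it is avoided because $0<\lambda_0<1$ is guaranteed by Schmidt rank $\geq 2$.
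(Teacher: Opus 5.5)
Your proof is correct and is essentially the paper's argument: the paper also has Alice measure in the Schmidt basis, so that Bob's conditional states $\{|\chi_i\rangle\}$ are orthogonal and $\mathcal{N}_n=1>p_0=\mathcal{N}^{\mathrm{bef}}_n$. For the classically correlated case it uses $\sum_i p_i|i\chi_i\rangle\langle i\chi_i|$, of which your $\frac{1}{2}(|00\rangle\langle 00|+|11\rangle\langle 11|)$ is a special case, and your two-outcome coarse-graining $\{|0\rangle\langle 0|,\mathbb{I}_A-|0\rangle\langle 0|\}$ is only a minor variant of the same idea.
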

\begin{proof}
Consider an arbitrary pure entangled state, expressed in Schmidt decomposition,
\begin{equation}
|\psi\rangle=\sum_{i=0}^{d-1}\sqrt{p_i}|i\chi_i\rangle,
\end{equation}
where $d$ is the dimension of subsystem $A$, $0<p_i<1$,  $p_i\geq p_{i+1}, \forall i$, $\sum_i p_i=1$,  $\{|i\rangle\}$ and $\{|\chi_i\rangle\}$ are sets of orthonormal states. Consider also a classically correlated state, $\rho=\sum_ip_i|i\chi_i\rangle\langle i\chi_i|$.
The remote predictability before measurement of Alice in the basis $\{|i\rangle\}$ on $|\psi\rangle$ and $\rho$ is given by $p_0<1$. But this measurement by Alice creates an ensemble of orthogonal states, $|\chi_i\rangle$ at the subsystem possessed by Bob, which are perfectly distinguishable. Thus, Bob can have a perfect remote predictability after measuring on his subsystem, surpassing the remote predictability bound before the measurement.  
\end{proof}
This establishes that any pure entangled state and some classically correlated states can be advantageous in the context of remote predictability. 

It is important to note that there exists a set of trivial measurements by Alice (at most one measurement operator acts non-trivially on subsystem $A$), given the Hilbert space of the shared state, which attain a perfect remote predictability for Bob.  For example, given the shared state such that subsystem $A$ lies in the support $\{|0\rangle,|1\rangle\}$,  the measurement: $\{\pi_0=|0\rangle\langle 0|+|1\rangle \langle 1|$, $\pi_1=|2\rangle\langle 2|\}$, where $|0\rangle$,~$|1\rangle$,~$|2\rangle$ are mutually orthogonal states, gives a perfect remote predictability.  Thus, it is interesting to examine remote predictability for \emph{non-trivial} measurements on subsystem $A$ which are defined as the measurements with at least two measurement operators acting non-trivially on the space of subsystem $A$.

It is easy to see from \eqref{zaroor} that any non-trivial projective measurement on one-half of a product state will yield an imperfect remote predictability. In the following two theorems, we show that any projective measurement on one-half of a pure entangled state yields a perfect remote predictability only when it is maximally entangled.

\begin{thm}\label{qayde}
     Any projective measurement on a subsystem of a maximally entangled state provides perfect remote predictability.
\end{thm}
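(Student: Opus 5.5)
We will show that the conditional states Bob receives are mutually orthogonal. By \eqref{eq:2}, orthogonality makes them perfectly distinguishable, so $\mathcal{N}_n=1$.

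\emph{Setup.} Since $\mathcal{N}_n$ is invariant under isometries on $B$, we may take the maximally entangled state in the form $|\phi^+\rangle=\frac{1}{\sqrt d}\sum_{k=0}^{d-1}|k\rangle|k\rangle$, with $d=d_A$ and $B$ embedded in a $d$-dimensional space. A general projective measurement on $A$ is a set of orthogonal projectors $\{\pi_i\}_{i=0}^{n-1}$ with $\pi_i\pi_j=\delta_{ij}\pi_i$ and $\sum_i\pi_i=\mathbb{I}_A$. It need not be rank one.

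\emph{Conditional states.} The key tool is the transpose trick,
\[
(\pi_i\otimes\mathbb{I})|\phi^+\rangle=(\mathbb{I}\otimes\pi_i^{T})|\phi^+\rangle,
\]
where the transpose is taken in the basis $\{|k\rangle\}$. It gives
\[
\tr_A\!\left[(\pi_i\otimes\mathbb{I}_B)\,|\phi^+\rangle\langle\phi^+|\right]=\tfrac{1}{d}\,\pi_i^{T}.
\]
Hence $p_i=\tr\pi_i/d$ and $\rho_B^i=\pi_i^{T}/\tr\pi_i$. Outcomes with $\pi_i=0$ have $p_i=0$ and can be discarded.

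\emph{Orthogonality.} Transposition preserves the relations $\pi_i^{T}\pi_j^{T}=(\pi_j\pi_i)^{T}=\delta_{ij}\pi_i^{T}$ and $\sum_i\pi_i^{T}=\mathbb{I}_B$. So $\{\pi_i^{T}\}$ is itself a projective measurement, and the supports of the $\rho_B^i$ are mutually orthogonal.

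\emph{Bob's measurement.} Bob chooses $M_i=\pi_i^{T}$, padded by the projector onto the orthogonal complement of the $d$-dimensional subspace if $d_B>d$. Then $\tr(M_i\rho_B^i)=1$ for every $i$, and \eqref{eq:2} gives $\mathcal{N}_n=\sum_ip_i=1$, which is the algebraic maximum.

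The main point needing care is the partial-trace computation for non-rank-one projectors, together with the bookkeeping for $d_B\ge d_A$ through the isometry invariance. Both are routine once the transpose identity is in hand.
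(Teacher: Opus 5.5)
Your proof is correct and is essentially the paper's argument. The transpose identity $(\pi_i\otimes\mathbb{I})|\phi^+\rangle=(\mathbb{I}\otimes\pi_i^{T})|\phi^+\rangle$ is the operator form of the paper's rewriting $|\psi_m\rangle=\frac{1}{\sqrt d}\sum_i|a_i\eta_i\rangle$, i.e.\ the $U\otimes\bar U$ invariance of the maximally entangled state, and both routes conclude that Bob's conditional states have mutually orthogonal supports. Your version has one small advantage: it gives $\tr_A[(\pi_i\otimes\mathbb{I}_B)|\phi^+\rangle\langle\phi^+|]=\pi_i^{T}/d$ for projectors of any rank in a single step, whereas the paper works out rank one explicitly and only asserts the higher-rank extension.
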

\begin{proof}
Consider a maximally entangled state, \begin{equation} \label{hua}
    |\psi_m\rangle=\frac{1}{\sqrt{d}}\sum_{i=0}^{d-1}|i\chi_i\rangle,
\end{equation}
 where $d$ is the dimension of a subsystem, $\{|i\rangle\}$ and $\{|\chi_i\rangle\}$ are sets of orthonormal states. Let us assume that Alice performs a projective measurement in an arbitrary basis $\{|a_i\rangle\}$.  Since $|\psi_m\rangle$ is maximally entangled, it can be expressed as $\sum_{i}\frac{1}{\sqrt{d}}|a_i\eta_i\rangle$,  where $|a_i\rangle=U|i\rangle$ and $|\eta_i\rangle=V|\chi_i\rangle$, such that $U$ and $V$ are unitary operators. Clearly, when Alice measures in the basis $\{|a_i\rangle\}$, the ensemble of states appearing at Bob's part is a set of orthogonal states, $\{|\eta_i\rangle\}_{i=0}^{d-1}$,  with equal probability. Thus, the remote prediction of Bob is one. \\
The projective measurement was assumed to be of rank-1 till here; however, the same result can easily be extended to any rank projective measurements. Clearly, different outcomes of an arbitrary rank projective measurement by Alice on a maximally entangled state will create a set of states, lying in orthogonal spaces, at Bob's subsystem, and hence the remote predictability will still be one.  
\end{proof}
Note that a consequence of the above result is that the outcomes of any two measurement settings can be predicted precisely with the aid of quantum memory, as discussed in ~\cite{Berta10}. For a general pure entangled state,
we will see that the perfect remote predictability does not hold for arbitrary projective measurements. 
Now consider an arbitrary pure entangled state, expressed in its Schmidt decomposition, i.e.,
\begin{equation}\label{diwar}
    |\psi\rangle=\sum_{i=0}^{d-1}\sqrt{p_i}|i\chi_i\rangle,
\end{equation}
where $p_i\geq 0,~\sum_ip_i=1,$ and $\{|i\rangle\}_{i=0}^{d-1}$ and $\{|\chi_i\rangle\}_{i=0}^{d-1}$ form orthonormal bases for systems $A$ and $B$, respectively. We refer to $\{|i\rangle\}$ as a Schmidt basis of system $A$ for  $|\psi\rangle$.  
Assume that Alice measures her subsystem in an arbitrary basis $\{|a\rangle\}_{a=0}^{d-1}$. Now we can express Schmidt basis, $\{|i\rangle\}$, in terms of the arbitrary basis, i.e.,
\begin{equation}\label{beghar}
    |i\rangle=\sum_a \alpha^i_a|a\rangle, ~~\forall ~i,
\end{equation}
where $\sum_a|\alpha^i_a|^2=1$ and $\sum_a\alpha^i_a (\alpha^{j}_a)^*=0,~\forall i\neq j,$ due to the orthonormality of the Schmidt basis. We can also write,
\begin{equation}
    |a\rangle=\sum_i (\alpha^i_a)^*|i\rangle, ~~\forall ~a,
\end{equation}
where $\sum_i|\alpha^i_a|^2=1$ and $\sum_i(\alpha^i_a)^* \alpha^{i}_{a'}=0,~\forall a\neq a',$ due to the orthonormality of $\{|a\rangle\}$. 
Now it is easy to see that Bob has the ensemble of states, 
\begin{equation}
|\phi_a\rangle=\sum_i\frac{\sqrt{p_i}\alpha^i_a}{\sqrt{\sum_i p_i |\alpha^i_a|^2}}|\chi_i\rangle,
\end{equation}
with the probability $q_a=\sum_i p_i |\alpha^i_a|^2$, when Alice gets the outcome $|a\rangle$ after her measurement. Now, 
\begin{equation}\label{haara_nahi}
    \langle \phi_a|\phi_{a'}\rangle=\sum_i p_i (\alpha^i_a)^*\alpha^i_{a'}.
\end{equation}
For maximally entangled states, $p_i=p_j=\frac{1}{d}$, and thus $\langle \phi_{a'}|\phi_{a}\rangle=\frac{1}{d}\sum_i(\alpha^i_a)^* \alpha^{i}_{a'}=0,~\forall a\neq a'$. Thus, Bob has an ensemble of orthogonal states, implying perfect remote predictability for an arbitrary projective measurement by Alice on her share of a maximally entangled state. This also presents a parallel proof for Theorem \ref{qayde}. 

\begin{thm}
    There always exists a projective measurement on one-half of a partially entangled state, yielding an imperfect remote predictability.
\end{thm}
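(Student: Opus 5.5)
The plan is to exhibit an explicit rank-one projective measurement on $A$ for which two of the conditional states on Bob's side fail to be orthogonal. By the formula \eqref{haara_nahi}, this overlap is
\[
\langle\phi_a|\phi_{a'}\rangle=\sum_i p_i(\alpha^i_a)^*\alpha^i_{a'} .
\]
If two of Bob's states with nonzero weights $q_a,q_{a'}>0$ are non-orthogonal, perfect discrimination is impossible. Indeed, $\mathcal{N}_n=1$ would require $\tr(M_a\phi_a)=1$ and $\tr(M_{a'}\phi_{a'})=1$. Then $M_a$ and $M_{a'}$ would each have to act as the identity on $|\phi_a\rangle$ and $|\phi_{a'}\rangle$, respectively. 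Together with $M_a+M_{a'}\le\mathbb{I}$, this forces $\langle\phi_a|\phi_{a'}\rangle=0$. So it suffices to produce a basis $\{|a\rangle\}$ with some pair $a\neq a'$ having nonzero overlap and positive probabilities.

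Here ``partially entangled'' means the Schmidt coefficients are not all equal. If some $p_i=0$, I would handle it either directly by the same construction or with the trivial remark that Bob's states then live in a smaller space. Since the coefficients are not all equal, there exist indices $j\neq k$ with $p_j\neq p_k$. I would then take Alice's basis to be the Schmidt basis on all indices except $j,k$, and on the span of $|j\rangle,|k\rangle$ the rotated pair
\[
|\pm\rangle=\tfrac{1}{\sqrt2}\left(|j\rangle\pm|k\rangle\right).
\]
With this choice the relevant $\alpha$'s are $\alpha^j_\pm=1/\sqrt2$, $\alpha^k_\pm=\pm1/\sqrt2$, and the overlap is
\[
\langle\phi_+|\phi_-\rangle\propto\tfrac12\left(p_j-p_k\right)\neq0 .
\]
The probabilities are $q_\pm=(p_j+p_k)/2$, and these are positive because $p_j\neq p_k$ implies that at least one of them is nonzero. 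The remaining outcomes yield states $|\chi_i\rangle$ orthogonal to both, but that does not matter: a single non-orthogonal pair already gives $\mathcal{N}_n<1$.

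I expect the only real subtlety to be making the ``non-orthogonal implies imperfect'' step airtight for general POVMs rather than just invoking intuition; the operator argument sketched above handles it. Optionally, I would also compute the actual value $\mathcal{N}_n=\tfrac12(p_j+p_k)\bigl(1+\sqrt{1-|\langle\phi_+|\phi_-\rangle|^2}\bigr)+\sum_{i\neq j,k}p_i<1$ via the Helstrom bound on the two-dimensional block, to quantify the gap.
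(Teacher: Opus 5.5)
Your proposal is correct and is essentially the paper's own proof. The paper also picks two Schmidt indices with $p_{i_0}\neq p_{j_0}$ and has Alice measure in $\frac{1}{\sqrt2}(|i_0\rangle\pm|j_0\rangle)$ together with the remaining Schmidt vectors; it then notes that Bob's two resulting states are non-orthogonal because $p_{i_0}\neq p_{j_0}$. Your operator argument that non-orthogonal states with positive weights cannot be perfectly discriminated, along with the explicit value $\mathcal{N}_n=1-\tfrac12(\sqrt{p_j}-\sqrt{p_k})^2<1$, makes rigorous a step the paper simply asserts as ``clear.''
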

\begin{proof} 
For a partially entangled state, there exists at least one pair, $i_0\in\{0,1,\ldots,d-1\}$ and $j_0\in\{0,1,\ldots,d-1\}$, such that $p_{i_0}\neq p_{j_0}$ in the structure of the state given in Eq. \eqref{diwar}. Without loss of generality, we set $i_0$ as $0$ and $j_0$ as $1$. Consider projective measurement for Alice in the basis, $\{\frac{1}{\sqrt{2}}|0+1\rangle,\frac{1}{\sqrt{2}}|0-1\rangle, \{|i\rangle\}_{i=2}^{d-1}\}.$ After her measurement, Bob has the ensemble of states, $$\left\{\frac{\sqrt{p_0}|0\rangle+\sqrt{p_1}|1\rangle}{\sqrt{p_0+p_1}},\frac{\sqrt{p_0}|0\rangle-\sqrt{p_1}|1\rangle}{\sqrt{p_0+p_1}}, \{|i\rangle\}_{i=2}^{d-1}\right\},$$
with probability $\{\frac{p_0+p_1}{2},\frac{p_0+p_1}{2}, \{p_i\}_{i=2}^{d-1}\}$.
Clearly, these states are not mutually orthogonal since $p_0\neq p_1$. Thus, these states can never be perfectly discriminated, and hence remote predictability is not perfect.
\end{proof}

\begin{observation}\label{Ehsaan}
A partially entangled two-qubit state never yields a perfect remote predictability
for non-trivial projective measurements in an arbitrary basis except the Schmidt basis.
\end{observation}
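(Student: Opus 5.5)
We use the machinery set up just before Theorem 2. Write the partially entangled two-qubit state in Schmidt form as in Eq.~\eqref{diwar} with $d=2$ and $p_0\neq p_1$. If one Schmidt coefficient vanishes the state is a product, and \eqref{zaroor} already rules out perfect predictability for non-trivial measurements. So we may take $0<p_1<p_0<1$.

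For a two-qubit system, a non-trivial projective measurement must be rank-1 in some orthonormal basis $\{|a\rangle\}_{a=0}^{1}$. A rank-2 projector together with the zero operator is trivial in the sense defined above. Expanding the Schmidt basis in this measurement basis as in Eq.~\eqref{beghar}, the coefficients $\alpha^i_a$ form a $2\times 2$ unitary matrix. Up to phases, which we will argue are irrelevant, we can parametrize it as $\alpha^0_0=\cos\theta$, $\alpha^1_0=e^{i\varphi}\sin\theta$, $\alpha^0_1=-e^{-i\varphi'}\sin\theta$, $\alpha^1_1=e^{i(\varphi-\varphi')}\cos\theta$. The Schmidt basis corresponds exactly to $\sin\theta\cos\theta=0$, i.e.\ $\theta\in\{0,\pi/2\}$.

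Bob's ensemble consists of the two pure states $|\phi_0\rangle,|\phi_1\rangle$ with probabilities $q_0,q_1$. Both probabilities are strictly positive whenever $\theta\notin\{0,\pi/2\}$, since $q_a$ is a convex combination of $p_0,p_1>0$ with nonzero weights. Perfect remote predictability means $\mathcal{N}_2=1$, which by the Helstrom formula means $\Delta=1$. For two pure states with nonzero weights, this holds if and only if $\langle\phi_0|\phi_1\rangle=0$. We will justify this briefly: $\sum_a \tr(M_a\rho_B^a)p_a=1$ forces $\tr(M_a\rho^a_B)=1$ for each $a$. That makes $M_a$ act as the identity on $|\phi_a\rangle$, and then $\sum_a M_a=\mathbb{I}$ forces the two supports to be orthogonal.

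It then remains to evaluate Eq.~\eqref{haara_nahi}:
\begin{equation*}
\langle\phi_0|\phi_1\rangle\propto p_0(\alpha^0_0)^*\alpha^0_1+p_1(\alpha^1_0)^*\alpha^1_1 = -e^{-i\varphi'}\sin\theta\cos\theta\,(p_0-p_1),
\end{equation*}
using the orthogonality relation $\sum_i(\alpha^i_0)^*\alpha^i_1=0$. More cleanly, we can avoid the parametrization altogether. That relation gives $(\alpha^1_0)^*\alpha^1_1=-(\alpha^0_0)^*\alpha^0_1$, so the overlap equals $(p_0-p_1)(\alpha^0_0)^*\alpha^0_1$. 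Since $p_0\neq p_1$, the overlap vanishes if and only if $\alpha^0_0=0$ or $\alpha^0_1=0$. By unitarity of the $2\times2$ coefficient matrix, that happens exactly when $|0\rangle$ coincides with one of the $|a\rangle$ up to phase, i.e.\ when the measurement basis is the Schmidt basis. Hence every other non-trivial projective measurement gives $\mathcal{N}_2<1$.

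The main obstacle is not the algebra, which is a one-line identity. It is making the case reduction airtight: checking that non-trivial projective measurements on a qubit are exactly rank-1 basis measurements, handling the degenerate product case, and justifying the equivalence between perfect discrimination and orthogonality. We would state that equivalence as a short lemma, as above.
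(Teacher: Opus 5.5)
Your argument is correct and is essentially the paper's proof. Both compute the overlap of Bob's conditional states from Eq.~\eqref{haara_nahi}, use the orthogonality relation $\sum_i(\alpha^i_0)^*\alpha^i_1=0$ to reduce it to $(p_0-p_1)(\alpha^0_0)^*\alpha^0_1$, and observe that this vanishes only when the measurement basis is the Schmidt basis. Your extra steps (that non-trivial qubit projective measurements are rank-1 basis measurements, and the short lemma that perfectly discriminating two pure states with positive weights forces orthogonality) make explicit what the paper takes for granted, and the abandoned parametrization can simply be dropped.
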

\begin{proof} 
See Appendix \ref{app:A0}.
\end{proof}

\section{Pure entangled state versus its dephased counterpart}
\label{sec:dephasing}
In this section we aim to compare the remote predictability of a measurement on one-half of a pure entangled state with the same measurement on the one-half of a classically correlated state. For a valid comparison with a pure entangled state, it is motivating to choose a classically correlated state which turns out after a noisy map acts on the entangled state. Since we assume that Bob prepares the joint system and sends one subsystem to Alice, it is plausible to assume that Alice's subsystem is influenced by some noise. We assume here that such a noise is a dephasing map and see its effect on the remote predictability of Bob (see Fig. \ref{fig:schematic}(b) for a schematic). Another reason to consider dephasing on Alice's system is that it allows us to compare a coherent superposition of product states (a pure entangled state) with a classical mixture of the same product states (dephased counterpart of the pure entangled state). Without loss of generality, we assume that the systems undergo dephasing in the computational basis. A full dephasing map acting on Alice's part of the shared state, $\rho_{AB}$, is defined as:
\begin{equation}\label{ruswa}
\mathcal{D}_A(\rho_{AB}):=\sum_i|i_A\rangle\langle i_A|\langle i_A|\rho_{AB}|i_A\rangle,
\end{equation}
 where $\{|i_A\rangle\}$ forms a computational basis for system $A$. In the following, $\mathcal{D}_A(\rho_{AB})$ will be referred to as the dephased counterpart of $\rho_{AB}$.
Any pure state $|\psi_{AB}\rangle$, up to a local unitary on Bob's part, can be expressed as $\sum_i \sqrt{p_i}|i\chi_i\rangle$, where $\{|\chi_i\rangle\}$ forms a basis for Bob's part but need not be mutually orthogonal.
Notice that  $\mathcal{D}_A(|\psi_{AB}\rangle\langle \psi_{AB}|)=\sum_ip_i |i\chi_i\rangle\langle i\chi_i|$, i.e., the dephased state of a pure state, expressed as a superposition of certain product states, is in a mixture of the same product states.

Now we compare the action of no dephasing with the full dephasing acting on Alice's part on Bob's remote predictability. If the state is unaffected by any noise, then the ensemble created at Bob's part is:
\begin{equation}
     \mathcal{E}_{\mathcal{\rho}}=\left\{s_i,~\frac{\tr_{A}(\pi_i \otimes \mathbb{I}_B \rho_{AB})}{s_i}\right\}_{i=0,1,\ldots,n-1}, \nonumber
\end{equation}
where $\{s_i=\tr(\pi_i \otimes \mathbb{I}_B \rho_{AB})\}_{i=0,1,\ldots,n-1}$. On the other hand, if the state undergoes full dephasing on Alice's part, then Bob's ensemble is:
\begin{eqnarray}
\label{ens1} \mathcal{E}_{\mathcal{D}_A(\rho)}=\left\{m_i,~\frac{\tr_{A}(\pi_i \otimes \mathbb{I}_B \mathcal{D}_A(\rho_{AB}))}{m_i}\right\}_{i=0,1,\ldots,n-1}, \nonumber
\end{eqnarray}
where $\{m_i=\tr(\pi_i \otimes \mathbb{I}_B \mathcal{D}_A(\rho_{AB}))\}_{i=0,1,\ldots,n-1}$.

\begin{observation}\label{dusra}
    There always exists a non-trivial measurement by Alice such that the remote predictability produced by a pure entangled state can always be produced by its dephased counterpart.
\end{observation}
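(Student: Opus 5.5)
The natural candidate is Alice's measurement in the dephasing (computational) basis itself, or any coarse-graining of it that is non-trivial. The idea is that a measurement diagonal in the computational basis cannot detect the off-diagonal coherences that $\mathcal{D}_A$ removes. I will show that for such measurements the two ensembles $\mathcal{E}_{\rho}$ and $\mathcal{E}_{\mathcal{D}_A(\rho)}$ at Bob's side coincide, so the remote predictabilities $\mathcal{N}_n$ defined in \eqref{eq:2} are equal.

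The key step is a one-line identity: if each $\pi_i$ is diagonal in $\{|i_A\rangle\}$, then $\mathcal{D}_A$ is self-dual (its Heisenberg picture is again the pinching $\sum_k |k\rangle\langle k|\,\cdot\,|k\rangle\langle k|$). Hence
\begin{equation*}
\tr_A\big((\pi_i\otimes\mathbb{I}_B)\,\mathcal{D}_A(\rho_{AB})\big)=\tr_A\big((\mathcal{D}_A^{\dagger}(\pi_i)\otimes\mathbb{I}_B)\,\rho_{AB}\big)=\tr_A\big((\pi_i\otimes\mathbb{I}_B)\,\rho_{AB}\big).
\end{equation*}
This can also be checked directly from \eqref{ruswa}: $\tr_A(\pi_i\otimes\mathbb{I}\,\mathcal{D}_A(\rho))=\sum_k\langle k|\pi_i|k\rangle\langle k|\rho|k\rangle$, which equals $\tr_A(\pi_i\otimes\mathbb{I}\,\rho)$ because $\pi_i=\sum_k\langle k|\pi_i|k\rangle|k\rangle\langle k|$. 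Consequently $m_i=s_i$ and the conditional states agree for every $i$. Since the optimization in \eqref{eq:2} depends only on the unnormalized operators $\tr_A(\pi_i\otimes\mathbb{I}_B\rho_{AB})$, the values of $\mathcal{N}_n$ coincide, and so does $\Delta$ in the two-outcome case.

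It remains to verify non-triviality. For a pure entangled state $|\psi\rangle=\sum_i\sqrt{p_i}|i\chi_i\rangle$ with at least two nonzero $p_i$, say $p_0,p_1>0$, the projective measurement $\{|0\rangle\langle0|,\mathbb{I}-|0\rangle\langle0|\}$, or the full computational-basis measurement, has at least two operators acting non-trivially on the support of $A$. Choosing such a measurement avoids the trivial measurements discussed earlier. The statement also holds for any $\rho_{AB}$, not only pure ones.

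I expect the only delicate point to be interpretational rather than technical. The dephasing basis is fixed as the computational basis, while the state need not be written in its Schmidt basis (the $|\chi_i\rangle$ need not be orthogonal). The argument above never uses the Schmidt form, so this causes no difficulty. One should simply emphasize that ``can always be produced'' means equality of $\mathcal{N}_n$ for that measurement. As an illustrative remark, when $\{|i\rangle\}$ is also a Schmidt basis, both values equal $1$ by the earlier Observation.
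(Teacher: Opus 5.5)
Your proposal is correct and follows the paper's proof: both take Alice's measurement in the dephasing (computational) basis and observe that Bob's unnormalized conditional states $\tr_A(\pi_i\otimes\mathbb{I}_B\,\rho_{AB})$ are unchanged by $\mathcal{D}_A$, so $\mathcal{N}_n$ is the same for the state and its dephased counterpart. Your version is more explicit and slightly more general than the paper's proof, which simply appeals to the definition: it covers any measurement diagonal in that basis and any $\rho_{AB}$, and it checks non-triviality, which holds because an entangled state must have at least two nonzero $p_i$ in its computational-basis expansion (otherwise it would be a product $|i\rangle|\chi_i\rangle$).
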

\begin{proof}
    $\mathcal{N}_n$ is invariant under the action of a unitary on Bob's subsystem. Thus, any pure entangled state, up to local unitary on $B$, can be expressed as $|\psi_{AB}\rangle=\sum_{i}\sqrt{p_i}|i_A\eta^i_B\rangle$, where $\{|i_A\rangle\}_{i=0}^{d_A-1}$  is the computational basis on the subsystem $A$ and $\{|\eta^i_B\rangle\}_{i=0}^{d_B-1}$ is a basis on the subsystem $B$ (elements of this basis may not necessarily be orthogonal).  Now consider another state $\mathcal{D}_A (|\psi_{AB}\rangle\langle\psi_{AB}|)=\sum_i p_i |i\eta^i\rangle\langle i\eta^i|.$  Let Alice perform the measurement on the computational basis. Then, from the definition of remote predictability, we have,
    \begin{equation}
        \mathcal{N}_n(\mathcal{D}_A(|\psi_{AB}\rangle\langle\psi_{AB}|), \{|i\rangle\})=\mathcal{N}_n(|\psi_{AB}\rangle\langle\psi_{AB}|, \{|i\rangle\}). \nonumber
    \end{equation}
\end{proof}
Therefore, for computational basis measurement by Alice,  the outcomes are equally remotely predictable by Bob when the shared state is a pure entangled state or when the same state has undergone a full dephasing on Alice's subsystem. Thus, entangled states are not special in the context of remote predictability if Alice measures on a certain basis. In the following theorem, we show the existence of measurements and a large set of pure entangled states such that the dephased counterpart of a pure entangled state from the set can have greater remote predictability than the pure state itself.

\begin{thm}
\label{th1}
There exist measurements by Alice such that the remote predictability is greater when Alice and Bob share a state, $\mathcal{D}_A(\rho_{AB})$, defined in eq. \eqref{ruswa}, compared to when they share state, $\rho_{AB}$, for a large set of pure entangled states, $\rho_{AB}$. 
\end{thm}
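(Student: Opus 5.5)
The plan is to exhibit the advantage explicitly in the simplest setting and then argue that it persists on an open, hence "large", set of states. By the invariance of $\mathcal{N}_n$ under unitaries on $B$, I will write any pure state, as in the proof of Observation~\ref{dusra}, as $|\psi_{AB}\rangle=\sqrt{p}\,|0\eta_0\rangle+\sqrt{1-p}\,|1\eta_1\rangle$ with $0<p<1$. The overlap $c=\langle\eta_0|\eta_1\rangle$ is generally nonzero, since the computational basis need not be a Schmidt basis. Its dephased counterpart is $\mathcal{D}_A(|\psi\rangle\langle\psi|)=p|0\eta_0\rangle\langle 0\eta_0|+(1-p)|1\eta_1\rangle\langle1\eta_1|$.

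Alice will perform a two-outcome projective measurement in the tilted basis $|a_0\rangle=\cos\theta|0\rangle+\sin\theta|1\rangle$, $|a_1\rangle=\sin\theta|0\rangle-\cos\theta|1\rangle$. This lets me use the Helstrom quantity $\Delta$. Set $u=\sqrt{p}|\eta_0\rangle$ and $w=\sqrt{1-p}|\eta_1\rangle$.

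\emph{Pure state.} A direct computation of $p_0\rho^0_B-p_1\rho^1_B$ gives
\begin{equation*}
\Delta_{\rm pure}(\theta)=\tr\left|\cos2\theta\,X+\sin2\theta\,Y\right|,\qquad X=|u\rangle\langle u|-|w\rangle\langle w|,\quad Y=|u\rangle\langle w|+|w\rangle\langle u| .
\end{equation*}

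\emph{Dephased state.} The cross terms disappear, leaving
\begin{equation*}
\Delta_{\rm deph}(\theta)=|\cos2\theta|\,\tr|X| .
\end{equation*}
At $\theta=0$ the two expressions coincide; this is exactly Observation~\ref{dusra}. Here $\tr|X|=\sqrt{1-4p(1-p)|c|^2}$, which is positive whenever $p\neq 1/2$ or $|c|<1$.

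The key step is a first-order perturbation argument around $\theta=0$. $X$ is a rank-two Hermitian operator. When it has no zero eigenvalue on the span of $u,w$, the trace norm is differentiable at $X$ and $\frac{d}{d\theta}\tr|X+\theta Z|\big|_{0}=\tr(\mathrm{sgn}(X)Z)$. Hence
\begin{equation*}
\Delta_{\rm pure}(\theta)=\tr|X|+2\theta\,\tr(\mathrm{sgn}(X)Y)+O(\theta^2),\qquad \Delta_{\rm deph}(\theta)=\tr|X|+O(\theta^2).
\end{equation*}
So whenever $\kappa:=\tr(\mathrm{sgn}(X)Y)=2\,\mathrm{Re}\langle w|\mathrm{sgn}(X)|u\rangle\neq0$, choosing a small $\theta$ with sign opposite to $\kappa$ gives $\Delta_{\rm deph}(\theta)>\Delta_{\rm pure}(\theta)$. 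That is, the dephased state has strictly larger remote predictability.

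The remaining work is to show that $\kappa\neq0$ for a large set of states. I will compute $\mathrm{sgn}(X)$ in the two-dimensional span of $u,w$; the eigenvalues of $X$ are $\frac{p-(1-p)}{2}\pm\frac12\sqrt{1-4p(1-p)|c|^2}$. I then expect $\kappa$ to be proportional to $\mathrm{Re}\,c$ times a positive factor. If so, the advantage holds for every state with $\mathrm{Re}\langle\eta_0|\eta_1\rangle\neq0$ (and $\eta_0,\eta_1$ not parallel). This set is open and dense among states for which the computational basis is not a Schmidt basis. A local phase can always make $c$ real, so in that case the condition is simply $c\neq 0$.

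I expect this explicit evaluation of $\mathrm{sgn}(X)$, and the check that $\kappa$ does not vanish identically, to be the main obstacle. If the expansion turns out degenerate, I will instead check a concrete example numerically/analytically, e.g. $p=1/2$ with real $c$. For a finite-angle version, such as $\theta=\pi/8$, the eigenvalues of $\cos2\theta X+\sin2\theta Y$ are explicit from the $2\times2$ Gram matrix, which gives closed-form $\Delta$'s to compare. For higher local dimensions, I will embed this two-level construction in a $\{|0\rangle,|1\rangle\}$ block and complete Alice's measurement with the remaining computational basis vectors. By the block structure, this reduces the comparison to the qubit case.
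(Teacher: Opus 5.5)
Your perturbative framework is sound, and it is a genuinely different route from the paper's. The paper instead evaluates both $\Delta$'s in closed form for an arbitrary qubit measurement, using that Bob's $2\times2$ operators have negative determinant so that $\Delta=\sqrt{(\tr)^2-4\det}$. It then reads off the finite-angle region $\sqrt{p_0p_1}\sin\beta<(p_1-p_0)\cos\beta\cos\frac{\theta}{2}$; their $\beta$ is your $2\theta$, and their $\cos\frac{\theta}{2}$ is your $c$.

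The problem is the step you left open: $\kappa$ is not ``$\mathrm{Re}\,c$ times a positive factor.'' Let $\Pi$ be the projector onto $\mathrm{span}\{u,w\}$. There $\mathrm{sgn}(X)=\frac{2}{\tr|X|}\bigl(X-\frac{2p-1}{2}\Pi\bigr)$, and one has $\tr(XY)=2(2p-1)\sqrt{p(1-p)}\,\mathrm{Re}\,c$ and $\tr Y=2\sqrt{p(1-p)}\,\mathrm{Re}\,c$. Hence
\[
\kappa=\frac{2(2p-1)\sqrt{p(1-p)}\,\mathrm{Re}\,c}{\sqrt{1-4p(1-p)|c|^{2}}}.
\]
The factor $2p-1=p_0-p_1$ changes sign and vanishes at $p=\frac12$. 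So your argument gives the advantage exactly on $\{p\neq\frac12,\ \mathrm{Re}\,c\neq0,\ |c|<1\}$, with $\theta$ small and of the sign of $(1-2p)\,\mathrm{Re}\,c$. This extends to all $c\neq0$ by a diagonal phase on Alice's side, which commutes with $\mathcal{D}_A$ and only adds a phase to her measurement. It is the same requirement as in the paper: $p_0\neq p_1$ and nonzero overlap.

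Your stated set, which includes $p=\frac12$, is therefore wrong. Your fallback example $p=\frac12$ is precisely a case with no advantage at all. For any two-outcome projective measurement on Alice's qubit, $\det\sigma=-1$. This forces Bob's operator to have determinant $-p(1-p)(1-|c|^2)$ on $\mathrm{span}\{u,w\}$ regardless of the measurement, so $\Delta_{\rm pure}\geq 2\sqrt{p(1-p)(1-|c|^2)}$. At $p=\frac12$ this equals $\sqrt{1-|c|^2}$, which is at least $|\langle 0|\sigma|0\rangle|\sqrt{1-|c|^2}=\Delta_{\rm deph}$.

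With $\kappa$ corrected, your proof goes through on an open dense set of states with two-dimensional $A$-support. It also exposes the mechanism: the dephased $\Delta$ is stationary at the computational basis, while the pure one has slope $2\kappa$. The paper's exact computation gives the full admissible range of measurement angles, not only small ones.

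One further point concerns your reduction of $d_A>2$ to the qubit case ``by block structure,'' which does not hold in general. Completing Alice's basis with $|2\rangle,\dots,|d_A-1\rangle$ hands Bob vectors $|\eta_i\rangle$ that need not be orthogonal to $\mathrm{span}\{\eta_0,\eta_1\}$. His discrimination problem then does not decouple into the qubit block plus the rest. It is cleaner to take the large set to be states whose $A$-support is two-dimensional, which is also what the paper's calculation effectively covers.
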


\begin{proof} We prove the statement by showing that there exists at least one measurement, for a large set of bipartite pure entangled states, $|\psi_0\rangle_{AB}$, such that $\Delta_{\mathcal{D}_A(\psi_0)}-\Delta_{\psi_0}>0$, where $\psi_0:=\ketbra{\psi_0}{\psi_0}_{AB}$. See Appendix \ref{app:A1} for the complete proof. 
\end{proof}

Thus, dephasing of Alice's system can be advantageous over no dephasing on a large set of pure entangled states, in the context of remote predictability. 
This leads to the counterintuitive conclusion that a mixture of product states, obtained by dephasing a coherent superposition of the same product states, can yield higher remote predictability than the coherent superposition itself. This implies that the noise on Alice's system might help Bob in predicting Alice's measurement outcomes.
However, from the proof of Theorem \ref{qayde}, we observe that the maximally entangled states do not fall in the broad class of states such that there exists a projective measurement providing greater remote predictability for the dephased counterpart of a maximally entangled state over itself. 
In the following, we prove that maximally entangled states provide greater remote predictability than their dephased counterpart if Alice performs a non-trivial projective measurement in a basis different than the computational basis.

\begin{thm}\label{dhun}
For any non-trivial projective measurement, apart from the measurement in the computational basis, on one-half of the maximally entangled state will provide greater remote predictability than its fully dephased counterpart. 
\end{thm}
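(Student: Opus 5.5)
By Theorem \ref{qayde}, the maximally entangled state $|\psi_m\rangle$ gives perfect remote predictability, $\mathcal{N}_n(\psi_m,\pi)=1$, for every projective measurement $\pi=\{\pi_i\}$ of Alice. It therefore suffices to show that the dephased state $\mathcal{D}_A(\psi_m)=\frac{1}{d}\sum_k |k\rangle\langle k|\otimes|\chi_k\rangle\langle\chi_k|$ gives $\mathcal{N}_n<1$ whenever $\pi$ is non-trivial and not the computational-basis measurement. Here we take the computational basis $\{|k\rangle\}$ to be the Schmidt basis of $A$, as in the dephasing setup of Observation \ref{dusra}. Since $\mathcal{N}_n$ is invariant under unitaries on $B$, we may set $|\chi_k\rangle=|k\rangle$. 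We also read ``measurement in the computational basis'' as any projective measurement whose projectors are all diagonal in $\{|k\rangle\}$, that is, sums of $|k\rangle\langle k|$; for such measurements the two sides are equal, as in Observation \ref{dusra}.

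First we compute Bob's ensemble after dephasing. The unnormalized conditional states are
\begin{equation*}
\tilde\rho^i_B=\tr_A(\pi_i\otimes\mathbb{I}\,\mathcal{D}_A(\psi_m))=\frac{1}{d}\sum_k \langle k|\pi_i|k\rangle\,|k\rangle\langle k|.
\end{equation*}
All of these are diagonal in the same basis, so the discrimination problem is classical. Bob's optimal measurement is to measure in $\{|k\rangle\}$ and, for each $k$, guess the $i$ that maximizes $\langle k|\pi_i|k\rangle$. This gives
\begin{equation*}
\mathcal{N}_n(\mathcal{D}_A(\psi_m),\pi)=\frac{1}{d}\sum_k\max_i\langle k|\pi_i|k\rangle .
\end{equation*}
This value equals $1$ exactly when, for every $k$, some $i$ has $\langle k|\pi_i|k\rangle=1$, because the diagonal entries $\langle k|\pi_i|k\rangle$ sum to $1$ over $i$.

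Next we show that perfect value forces $\pi$ to be diagonal. Suppose $\langle k|\pi_i|k\rangle=1$ for some $i=i(k)$. Since $\pi_i$ is a projector, $\|\pi_i|k\rangle\|^2=\langle k|\pi_i|k\rangle=1$, hence $\pi_i|k\rangle=|k\rangle$. So every computational basis vector lies in the range of some projector. Because the ranges of the $\pi_i$ are mutually orthogonal and span the space, each $\pi_i$ equals $\sum_{k:i(k)=i}|k\rangle\langle k|$, i.e., $\pi$ is a coarse-graining of the computational-basis measurement. This is exactly the excluded case. For every other non-trivial projective measurement, some $k$ has $\max_i\langle k|\pi_i|k\rangle<1$, so $\mathcal{N}_n(\mathcal{D}_A(\psi_m),\pi)<1=\mathcal{N}_n(\psi_m,\pi)$.

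The main obstacle is not computational but interpretive. We must fix what ``apart from the computational basis'' means so that it also covers coarse-grained, higher-rank diagonal measurements, and we must note that the dephasing basis coincides with the Schmidt basis. If the dephasing basis differed from the Schmidt basis, we would first rewrite $|\psi_m\rangle$ in the dephasing basis; this is always possible for a maximally entangled state, as in the proof of Theorem \ref{qayde}, and is absorbed by a unitary on $B$. The argument above then goes through unchanged.
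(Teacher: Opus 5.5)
Your proof is correct. It follows the paper's skeleton: Theorem~\ref{qayde} gives value $1$ on the pure side, and after dephasing Bob's conditional states are all diagonal in $\{|\chi_k\rangle\}$, weighted by diagonal entries of Alice's projectors. Where you differ is in how you finish.

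The paper handles only rank-1 measurements directly. Writing $|i\rangle=\sum_j\alpha^i_j|a_j\rangle$, Bob's states $\sum_i|\alpha^i_j|^2|\chi_i\rangle\langle\chi_i|$ are orthogonal only if the unitary $(\alpha^i_j)$ is a permutation up to phases. The paper then covers higher rank by asserting that non-orthogonality for a rank-1 basis persists for coarser projective measurements in the same basis. This orthogonality check is short and matches the expansion used for the paper's pure-state results, but it settles only rank 1 cleanly.

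You instead compute the exact dephased value $\frac{1}{d}\sum_k\max_i\langle k|\pi_i|k\rangle$, which is legitimate because all of Bob's states commute, so the discrimination problem is classical. You then characterize equality to $1$ for projectors of arbitrary rank via $\langle k|\pi_i|k\rangle=1\Rightarrow\pi_i|k\rangle=|k\rangle$. This buys a quantitative gap rather than a bare strict inequality, and it handles higher rank rigorously.

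The paper's rank-1-to-higher-rank step is not valid as stated. For $d=3$, the basis $\{|0\rangle,(|1\rangle\pm|2\rangle)/\sqrt2\}$ gives non-orthogonal (indeed identical) states for the last two outcomes at rank 1. Its coarse-graining $\{|0\rangle\langle0|,\,|1\rangle\langle1|+|2\rangle\langle2|\}$, however, gives orthogonal states and hence perfect predictability after dephasing. So your explicit reading of the excluded case as all projective measurements diagonal in the computational basis is exactly what makes the theorem true. Your remark that the dephasing basis can always be taken as a Schmidt basis of $|\psi_m\rangle$, absorbed by a unitary on $B$, also makes explicit an assumption the paper uses silently.
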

\begin{proof}
Consider the state $|\psi_m\rangle$, given in Eq. \eqref{hua}, then $\mathcal{D}_A(\psi_m)=\frac{1}{d}\sum_{i=0}^{d-1} |i\chi_i\rangle\langle i\chi_i|$. Let Alice perform a rank-1 projective measurement in the basis $\{|a_i\rangle\}$. 
Let us expand $|i\rangle$ in the basis $\{|a_j\rangle\}$, i.e., $|i\rangle=\sum_j\alpha^i_j|a_j\rangle$, where $\alpha^i_j$ are complex coefficients. Now it is not difficult to see that after the Alice's measurement in the basis $\{|a_j\rangle\}$, Bob has the ensemble of states, $$\Bigg\{\frac{\sum_i |\alpha^i_j|^2 |\chi_i\rangle\langle \chi_i|}{\sum_i |\alpha^i_j|^2}\Bigg\}_{j=0}^{d-1}$$ with probability $\frac{1}{d}\sum_i|\alpha^i_j|^2$. 
But these states are orthogonal only if $\{|a_i\rangle\}$ is same as $\{|i\rangle\}$ up to some permutation of the states in $\{|i\rangle\}$. Thus, the remote predictability will be less than one if Alice performs a rank-1 projective measurement in a basis $\{|a_i\rangle\}$ different from the basis $\{|i\rangle\}$.   \\
Now, if the states at Bob's end are not orthogonal for a rank-1 projective measurement in the basis $\{|a_i\rangle\}$, then they must not be orthogonal for a higher-rank projective measurement in the same basis. Thus, the remote predictability will be less than one. But, on the other hand, remote predictability is perfect using projective measurements on maximally entangled states. Hence, the remote predictability is greater for a maximally entangled state than its dephased state when Alice performs any projective measurement in a basis which is different from the basis used for dephasing. 
\end{proof}

\begin{corol}
    The remote predictability is the same for the maximally entangled state and its dephased counterpart if the projective measurement is performed in the computational basis.
\end{corol}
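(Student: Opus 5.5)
Since the corollary only concerns projective measurements in the computational basis, I will show directly that both remote predictabilities equal one; no comparison inequality is needed. I take the maximally entangled state $|\psi_m\rangle$ of Eq.~\eqref{hua}, written (as in the proof of Theorem~\ref{dhun}) in the computational basis $\{|i\rangle\}$ of $A$, with $\{|\chi_i\rangle\}$ orthonormal on $B$. Its dephased counterpart is $\mathcal{D}_A(\psi_m)=\frac{1}{d}\sum_i |i\chi_i\rangle\langle i\chi_i|$. The pure-state value is already fixed by Theorem~\ref{qayde}: any projective measurement on one-half of $|\psi_m\rangle$ gives $\mathcal{N}_n=1$, and in particular the computational basis one does. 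So it remains to show that the dephased state also attains $\mathcal{N}_n=1$.

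For the dephased state, I first treat the rank-1 measurement $\{|i\rangle\langle i|\}$. Alice's outcome $i$ occurs with probability $1/d$ and leaves Bob in $|\chi_i\rangle$. These states are mutually orthogonal, so Bob measures in the basis $\{|\chi_i\rangle\}$ and guesses perfectly, giving $\mathcal{N}_n=1$. One could instead invoke Observation~\ref{dusra} with $p_i=1/d$ and $|\eta^i\rangle=|\chi_i\rangle$, which says the two states give identical remote predictability for this measurement; combined with Theorem~\ref{qayde}, that finishes the rank-1 case at once.

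The only step needing a little care is the case of higher-rank projective measurements that are "in the computational basis", meaning each $\pi_k=\sum_{i\in S_k}|i\rangle\langle i|$ for a partition $\{S_k\}$ of the index set. For the dephased state, outcome $k$ then leaves Bob in $\frac{1}{|S_k|}\sum_{i\in S_k}|\chi_i\rangle\langle\chi_i|$. These states are supported on mutually orthogonal subspaces $\mathrm{span}\{|\chi_i\rangle\}_{i\in S_k}$, so Bob projects onto those subspaces and again succeeds with certainty. This matches the value $1$ for the pure state from Theorem~\ref{qayde}, so the two remote predictabilities coincide for every projective measurement in the computational basis. This mirrors the reverse of the orthogonality argument used in the proof of Theorem~\ref{dhun}.
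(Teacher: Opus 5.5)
Your proof is correct and follows essentially the same route as the paper. Theorem~\ref{qayde} gives perfect remote predictability for the pure state, and measuring the dephased state in the computational basis leaves Bob with mutually orthogonal states $|\chi_i\rangle$, so both values equal one. Your explicit treatment of coarse-grained projectors $\pi_k=\sum_{i\in S_k}|i\rangle\langle i|$ is a small but welcome addition that the paper leaves implicit.
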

\begin{proof}
The proof of Theorem \ref{dhun}, also shows that if Alice performs measurement in the computational basis $\{|i\rangle\}$, the same basis used to dephase $|\psi_m\rangle$, given in Eq. \eqref{hua}, then the ensemble of states at Bob's end becomes orthogonal and thus the remote predicatbility is one even when the shared state is $\mathcal{D}_A(\psi).$
\end{proof}

The results of theorems \ref{th1} and \ref{dhun}, indicate that the maximally entangled state may have an edge over other pure entangled states in the context of remote predictability.
Now consider a more general dephasing map on Alice's system which is defined as:
 \begin{equation}\label{tanha}
\mathcal{P}^q_A(\rho_{AB}):=q\mathcal{D}_A(\rho_{AB})+(1-q)\rho_{AB},
 \end{equation}
where $0\leq q\leq 1$.
In the following theorem, we show that the maximally entangled state is advantageous for remote predictability compared to the state derived from \emph{any amount} of dephasing to the maximally entangled state for any two-outcome projective measurement. 

\begin{thm}
\label{th2}
There are no two-outcome projective measurements on subsystem $A$ such that the remote predictability is higher for $\mathcal{P}^q_A(\rho_{AB})$, defined in eq. \eqref{tanha}, than the remote predictability for the maximally entangled state $\rho_{AB}$. 
\end{thm}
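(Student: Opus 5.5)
The plan is to compare both sides through the Helstrom quantity $\Delta$ and to show that the maximally entangled state always gives the largest possible value. By Theorem~\ref{qayde}, any projective measurement $\{\pi_0,\pi_1=\mathbb{I}-\pi_0\}$ on one half of $\rho_{AB}=|\psi_m\rangle\langle\psi_m|$ yields perfect remote predictability, so $\mathcal{N}_2(\rho_{AB},\pi)=1$. Since $\mathcal{N}_2\leq 1$ always holds, it follows at once that
\begin{equation}
\mathcal{N}_2(\mathcal{P}^q_A(\rho_{AB}),\pi)\leq 1=\mathcal{N}_2(\rho_{AB},\pi)
\end{equation}
for every $q\in[0,1]$ and every two-outcome projective measurement $\pi$. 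Hence no such measurement can make the dephased state strictly better, which is exactly the claim.

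To make the comparison more informative, I would also show where the inequality is strict. Write $\tr_A(\pi_i\otimes\mathbb{I}_B\,\mathcal{P}^q_A(\rho))=q\,\sigma_i^{D}+(1-q)\,\sigma_i$, where $\sigma_i=\tr_A(\pi_i\otimes\mathbb{I}\,\psi_m)$ and $\sigma_i^D=\tr_A(\pi_i\otimes\mathbb{I}\,\mathcal{D}_A(\psi_m))$. The triangle inequality for the trace norm then gives
\begin{equation}
\Delta_{\mathcal{P}^q}\leq q\,\Delta_{\mathcal{D}}+(1-q)\,\Delta_{\psi_m},
\end{equation}
and $\Delta_{\psi_m}=1$ because $\sigma_0$ and $\sigma_1$ have orthogonal supports. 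Theorem~\ref{dhun} shows $\Delta_{\mathcal{D}}<1$ whenever $\pi$ is not diagonal in the computational basis. Therefore, for $q>0$ and such $\pi$, we get $\Delta_{\mathcal{P}^q}<1$ and the maximally entangled state is strictly better. If $\pi$ is diagonal in the dephasing basis, $\mathcal{D}_A$ commutes with the measurement. In that case $\sigma_i^D=\sigma_i$, both sides equal $1$, and we have equality, matching the Corollary.

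The only delicate point is applying Theorem~\ref{dhun} to two-outcome projectors of arbitrary rank. That theorem is phrased for rank-one bases, but a two-outcome projective measurement is a coarse-graining of some orthonormal basis. If the dephased ensemble supports $\sigma_0^D$ and $\sigma_1^D$ were orthogonal, then $\pi_0$ would have to be block diagonal in the computational basis, meaning it commutes with every $|i\rangle\langle i|$. I would verify this by writing $\sigma_i^D=\frac1d\sum_k\langle k|\pi_i|k\rangle\,|\chi_k\rangle\langle\chi_k|$. Orthogonality requires each $\langle k|\pi_0|k\rangle\in\{0,1\}$, and for a projector that forces $\pi_0$ to be diagonal. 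This step is not needed for the main non-strict claim, which follows directly from Theorem~\ref{qayde}.
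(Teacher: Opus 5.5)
Your proof is correct, but its main step differs from the paper's. You get the theorem as an immediate corollary of Theorem~\ref{qayde}: for every complete two-outcome projective measurement, the maximally entangled state already attains the algebraic maximum $\mathcal{N}_2=1$, so no state, including $\mathcal{P}^q_A(\rho_{AB})$, can exceed it.

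The paper (Appendix~\ref{app:A2}) does not use Theorem~\ref{qayde}. It starts from the same triangle-inequality split as your second paragraph, $\Delta_{\mathcal{P}^q_A(\rho)}\le(1-q)\Delta_\rho+q\Delta_{\mathcal{D}_A(\rho)}$. It then computes the undephased trace norm explicitly from $\sigma^2=\mathbb{I}_r$, which gives $r/d$, and bounds the dephased term by the diagonal entries $|\langle i|\sigma|i\rangle|$.

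That calculation is self-contained, and it is written for the nominally broader case $\pi_0+\pi_1=\mathbb{I}_r$ with $r\le d$. There $\Delta_\rho=r/d<1$, so your ``everything is at most $1$'' argument says nothing. In that case one needs the dephased term to be dominated by the undephased one, e.g.\ via $|\langle i|\sigma|i\rangle|\le\langle i|\pi_0+\pi_1|i\rangle$. The paper's stated bound $|\langle i|\sigma|i\rangle|\le 1$ only closes the argument when $r=d$.

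For the theorem as stated, with $\pi_0+\pi_1=\mathbb{I}$, your route is shorter and also gives more. You show strict inequality for every $q>0$ exactly when $\pi_0$ is not diagonal in the dephasing basis, and equality otherwise. The paper records strictness only at $q=1$, through Theorem~\ref{dhun} and its corollary.

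Your direct check via $\langle k|\pi_0|k\rangle\in\{0,1\}$ is the right criterion for higher-rank projectors. It is better than relying on the rank-one formulation of Theorem~\ref{dhun}, because coarse-graining a non-computational basis can produce a measurement that is diagonal in the computational basis. For example, with $d\ge 3$, $\pi_0=|+\rangle\langle+|+|-\rangle\langle-|=|0\rangle\langle0|+|1\rangle\langle1|$. So ``a different basis at rank one'' does not by itself imply non-orthogonality after coarse-graining.
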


\begin{proof} See Appendix \ref{app:A2}.
\end{proof}

Note that the choice $q=1$ reduces the result of the theorem to the scenario where a complete dephasing channel acts on Alice's subsystem.  Thus, remote predictability for any projective measurement by Alice never increases when Alice's system of the shared maximally entangled state undergoes any amount of dephasing.


\section{Conclusion}
~We studied remote predictability, an operational quantity that characterizes how well one observer can infer the outcome of a measurement performed by a spatially separated observer on a shared quantum system.    We showed that product states offer no advantage for remote prediction beyond the information available prior to measurement, whereas arbitrary pure entangled states and certain classically correlated states can enhance predictability.  
We proved that maximally entangled states possess a remarkable and universal property: perfect remote predictability can always be achieved for arbitrary projective measurements performed on one subsystem. In contrast, this property is generally absent for partially entangled states. 
Next, we found that dephasing on one subsystem can, in certain regimes, enhance remote predictability. Thus, while decoherence is generally believed to degrade quantumness~\cite{zurek2003, schlosshauer2007}, our results highlight a complementary regime where controlled noise can aid remote predictability.
However, this enhancement vanishes for maximally entangled states under any projective measurement, indicating that such states retain a fundamental optimality in this context.
The observed enhancement of remote predictability under dephasing echoes a broader class of noise-assisted quantum effects, e.g., environment-assisted quantum transport in molecular networks~\cite{Plenio2008, Rebentrost2009, Mohseni2008, Caruso2009} and entanglement-generation through dissipative state engineering~\cite{Verstraete2009, krauter2011}. The framework for remote predictability may also inspire new protocols for distributed remote predictability in quantum networks~\cite{Branciard10, Branciard12, Fritz12, Boreiri23} and sequential scenarios~\cite{Silva15, Curchod17, Brown20, SPS22, PSS22, SPS25, MSSS25}, and can be useful in quantum channel determination and discrimination tasks~\cite{Matthews10, Holevo12, Piani15}.

\acknowledgments
C.S. acknowledges partial support by the National Science Centre, Poland, under grant Opus 25, 2023/49/B/ST2/02468. A.B. acknowledges support from ‘INFOSYS scholarship for senior students’ at Harish Chandra Research Institute, India. MP and SPM were supported by the QuantERA II program (Mf-QDS) and QuantERA III program (AQuSeND) that have received funding from the European Union’s Horizon 2020 research and innovation program under Grant Agreement No.101017733 and from the Agencia Estatal de Investigación, with project codes PCI2022-132915, PCI2024-153474 and by QNAVIUM Project SCPP2400C011413XV0 funded by MICIU/AEI/10.13039/501100011033 and by the European Union NextGenerationEU/PRTR, and by the European Union (Quantum Flagship project ASPECTS, Grant Agreement No. 101080167). US acknowledges financial support from the Anusandhan National Research Foundation (ANRF), Government of India, under the Grant No. ANRF/ARG/2025/004617/PS.

\onecolumngrid
\section*{Appendix}
\subsection{Proof of Observation  \ref{Ehsaan}}\label{app:A0}
From Eq. \eqref{diwar}, we have $d=2$ and $p_0\neq p_1$ for two-qubit partially entangled pure states. 
We assume that Alice performs a non-trivial projective measurement, i.e., at least two projectors of the measurement act non-trivially on Alice's qubit. Thus, the measurement can be represented by a couple of orthonormal states $\{|a\rangle\}_{a=0}^1$, which can be expressed in the computational basis, $\{|i\rangle\}_{i=0}^1$ for a qubit, as given in 
\eqref{beghar}. We also consider it to be different from the Schmidt basis $\{|i\rangle\}$. These facts mean $\alpha^i_a\neq 0,~\forall i,a.$ 
From Eq. \eqref{haara_nahi}, the inner product between the two states of Bob's ensemble after Alice's measurement,
\begin{eqnarray}
    \langle \phi_0|\phi_1\rangle&=&p_0(\alpha^0_0)^* \alpha^0_1+p_1(\alpha^1_0)^* \alpha^1_1 \nonumber \\
    &=&p_0(\alpha^0_0)^* \alpha^0_1-p_1(\alpha^0_0)^* \alpha^0_1 \nonumber \\
    &=&(p_0-p_1)(\alpha^0_0)^* \alpha^0_1,
\end{eqnarray}
where in the second equality, we use the identity $\sum_i(\alpha^i_a)^* \alpha^{i}_{a'}=0,~\forall a\neq a'$. 
Since $\alpha^i_a\neq 0,~\forall i,a$ and $p_0\neq p_1$, thus $\langle \phi_0|\phi_1\rangle\neq 0$.
This implies that the states at Bob's end are not mutually orthogonal and hence will never have perfect distinguishability. We thus conclude that an arbitrary two-outcome non-trivial projective measurement, except in the Schmidt basis, on one-half of partially entangled two-qubit states never yields perfect remote predictability.

\subsection{Proof of Theorem \ref{th1}}\label{app:A1}
Since the quantity $\Delta_\psi$ for any state $\psi_{AB}$ is invariant if any unitary acts on the system $B$, any result for an arbitrary pure state up to a local unitary on system $B$ is also true for an arbitrary pure state. An arbitrary pure bipartite entangled state, up to a local unitary on system $B$, can be expressed as
\begin{eqnarray}
\label{superposed}
    \ket{\psi}_{AB}=\sum_{i=0}^{d-1}\sqrt{p_i}\ket{i\chi_i},
\end{eqnarray}
where $\sum_ip_1=1$, $p_i> 0$ $\braket{i|j}=\delta_{ij}$, $d\geq 2$ is the dimension of system $A$, $\{\ket{\chi_i}\}$ are normalised but not necessarily orthogonal, such that $|\chi_i\rangle=\sum_{j=0}^i t_{ij} |j \rangle$, where $\sum_j|t_{ij}|^2=1$ $\forall j=0,1,\ldots i$, $\forall i=0,1,\ldots,d-1$. The dephased state, $\mathcal{D}_A(\psi)=\sum_i p_i |i\chi_i\rangle\langle i\chi_i|$. 
For simplicity and without loss of generality, we explicitly define some $t_{ij}$'s: $t_{00}=1$, $t_{10}=c_\theta$, $t_{11}=s_\theta e^{i\phi}$, where 
$c_\theta:=\cos\frac{\theta}{2}$, $s_\theta:=\sin\frac{\theta}{2}$, $0\leq \theta\leq \pi$, and  $0\leq \phi \leq 2\pi$.

Consider a two-outcome projective measurement:
\begin{eqnarray}\label{dekha}
\pi_0&=&P(c_\beta|0\rangle+s_\beta e^{i\gamma}|1\rangle), \nonumber \\ \pi_1&=&P(s_\beta|0\rangle-e^{i\gamma}c_\beta|1\rangle),
\end{eqnarray}
where $0\leq\beta\leq\pi$, $0\leq\gamma<2\pi$, and $P(|\nu\rangle)=|\nu\rangle\langle\nu|$. 
Let $\sigma=\pi_0-\pi_1$, then $\sigma=c_\beta|0\rangle\langle0|-c_\beta|1\rangle\langle 1| + s_ \beta(e^{-i\gamma}|0\rangle\langle1|+e^{i\gamma}|1\rangle\langle0|)$.

Now,
\begin{eqnarray}
\Delta_\psi&=&\tr|\tr_{A}\left(\sigma \otimes \mathbb{I} \psi\right)|, \\    \Delta_{\mathcal{D}_A(\psi)}&=&\tr|\tr_{A}\left(\sigma \otimes \mathbb{I} \mathcal{D}_A(\psi)\right)|.
\end{eqnarray}

Let's first compute $\Delta_\psi$:
    \begin{eqnarray}
    \tr_{A}\left(\sigma \otimes \mathbb{I} \psi\right)&=&p_0\sigma_{00}|0\rangle\langle 0|+p_1\sigma_{11}|\chi_1\rangle\langle \chi_1|+\sqrt{p_0p_1}\left(\sigma_{01}|0\rangle\langle \chi_1|+\sigma_{10}|\chi_1\rangle\langle 0|\right) \nonumber \\
    &=& p_0\sigma_{00}|0\rangle\langle 0|-p_1\sigma_{00}|\chi_1\rangle\langle \chi_1|+\sqrt{p_0p_1}\left(\sigma_{01}|0\rangle\langle \chi_1|+\sigma_{10}|\chi_1\rangle\langle 0|\right), \nonumber \\
    &=& \left(p_0\sigma_{00}-p_1\sigma_{00}c_\theta^2+\sqrt{p_0p_1}(\sigma_{10}+\sigma_{01})c_\theta\right)|0\rangle\langle 0| -p_1\sigma_{00}s_\theta^2|1\rangle\langle 1|\nonumber \\&&+\left\{\left(-p_1\sigma_{00}c_\theta +\sqrt{p_0p_1}\sigma_{01}\right)s_\theta e^{-i\phi}|0\rangle\langle 1| + h.c.\right\},
\end{eqnarray}

where $\sigma_{ij}=\langle i|\sigma|j\rangle$ and second line above follows from the fact that $\sigma=\pi_0-\pi_1$ is a trace-less operator. Also, $O+h.c.=O + O^\dagger$ for any operator $O$.
Let the two eigenvalues of the operator $\tr_{A}\left(\sigma \otimes \mathbb{I} \psi\right)$ are denoted by $\epsilon_1$ and $\epsilon_2$ ($\epsilon_1 \geq \epsilon_2$). Thus, $\Delta_S=|\epsilon_1|+|\epsilon_2|$.  
Now the determinant of the matrix $\tr_{A}\left(\sigma \otimes \mathbb{I} \psi\right)$ is $-p_0p_1s_\theta^2(\sigma_{00}^2+|\sigma_{01}|^2)$, which is negative and thus $\epsilon_1$ and $\epsilon_2$ have opposite signs. This implies 
    \begin{eqnarray}
    \Delta_\psi&=&\epsilon_1-\epsilon_2, \nonumber \\
    &=& \Big[ (p_1-p_0)^2\sigma_{00}^2+4p_0p_1s_\theta^2\sigma_{00}^2-2(p_1-p_0)\sqrt{p_0p_1}\sigma_{00}(\sigma_{10}+\sigma_{01})c_\theta \nonumber \\ &&+p_0p_1(\sigma_{10}+\sigma_{01})^2c_\theta^2+4p_0p_1|\sigma_{01}|^2s_\theta^2\Big]^\frac{1}{2}. 
\end{eqnarray}
Now let's compute $\Delta_{\mathcal{D}_A(\psi)}$:
\begin{eqnarray}
    \tr_{A}\left(\sigma \otimes \mathbb{I} {\mathcal{D}_A(\psi)}\right)&=&p_0\sigma_{00}|0\rangle\langle 0|-p_1\sigma_{00}|\psi\rangle\langle \psi| \nonumber \\
    &=&\left(p\sigma_{00}-p_1\sigma_{00}c_\theta^2\right)|0\rangle\langle 0|-p_1\sigma_{00}s_\theta^2|1\rangle\langle 1| \nonumber \\
    &&-p_1\sigma_{00}c_\theta s_\theta \left(e^{i\phi}|0\rangle\langle 1| + h.c.\right).
\end{eqnarray}

Let the two eigenvalues of the operator $\tr_{A}\left(\sigma \otimes \mathbb{I} \mathcal{D}_A(\psi)\right)$ are denoted by $\epsilon^M_1$ and $\epsilon^M_2$ ($\epsilon^M_1 \geq \epsilon^M_2$). Now the determinant of the matrix $\tr_{A}\left(\sigma \otimes \mathbb{I} {\mathcal{D}_A(\psi)}\right)$ is $-p_0p_1s_\theta^2\sigma_{00}^2$, which is negative and thus $\epsilon^M_1$ and $\epsilon^M_2$ have opposite signs. This implies
\begin{eqnarray}
   \Delta_{\mathcal{D}_A(\psi)}&=&\epsilon^M_1-\epsilon^M_2, \nonumber \\
   &=& \Big[(p_1-p_0)^2\sigma_{00}^2+4 p_0p_1\sigma_{00}^2s_\theta^2\Big]^\frac{1}{2}.
\end{eqnarray}
Now $\Delta_{\mathcal{D}_A(\psi)}^2>\Delta_\psi^2$ if
    \begin{eqnarray}\label{ehem}
   p_0p_1(\sigma_{10}+\sigma_{01})^2c_\theta^2+4p_0p_1|\sigma_{01}|^2s_\theta^2 
     < 2(p_1-p_0)\sqrt{p_0p_1}\sigma_{00}(\sigma_{10}+\sigma_{01})c_\theta
\end{eqnarray}
Now we will show that there exists a range of parameters for the choice of measurement, given in Eq. \eqref{dekha}, and a large set of pure entangled states, such that the above condition will be satisfied.
Thus the condition $\Delta_{\mathcal{D}_A(\psi)}^2>\Delta_\psi^2$ becomes
  \begin{eqnarray}\label{jaha}
    4p_0p_1\sin^2\beta(\cos^2\gamma c_\theta^2 +s_\theta^2) <4(p_1-p_0)\sqrt{p_0p_1}\cos\beta\sin\beta\cos\gamma c_\theta.
\end{eqnarray}  
For the parameters of states: $p_0\neq p_1$ or $\theta\neq\pi$), we will show that $\Delta_{\mathcal{D}_A(\psi)}>\Delta_\psi$ for some measurements. Let's choose the measurement such that $\gamma=0$ and $\beta \neq 0$. Then inequality \eqref{jaha} boils down to
\begin{eqnarray}\label{dhua}
    4\sqrt{p_0p_1}\sin\beta \Big(\sqrt{p_0p_1}\sin\beta
    -(p_1-p_0)\cos\beta\cos\frac{\theta}{2}\Big)<0,
\end{eqnarray}
and since $4\sqrt{p_0p_1}\sin\beta>0$, this implies
\begin{equation}\label{kua}
    \tan\beta<\frac{(p_1-p_0)\cos\frac{\theta}{2}}{\sqrt{p_0p_1}},
\end{equation}
which can always be satisfied since $\tan\beta \in (-\infty, \infty)$. 

\hfill $\blacksquare$

\subsection{Proof of Theorem \ref{th2}}\label{app:A2}

Consider maximally entangled states, which are of the form $\rho_{AB}=\ketbra{\psi}{\psi}$ with $\ket{\psi}=\sum_{i=1}^d\sqrt{p_i}\ket{i\chi_i}$, where $\braket{i|j}=\braket{\chi_i|\chi_j}=\delta_{ij}$,  and $p_i=1/d$, $\forall i,j$.  Let the two-outcome projective measurement be given by: $\{\pi_0,\pi_1\}$,  $\pi_0+\pi_1=\mathbb{I}_r$. Without loss of generality, we assume $r\leq d$ and let $\sigma:=\pi_0-\pi_1$.
For such states and measurements, 
\begin{eqnarray}
\label{d_dr}
    \Delta_{\mathcal{P}^q_A(\rho)} &=& \frac{1}{d}\tr \Big| (1-q)\sum_{\alpha \beta} \bra{\beta}\sigma\ket{\alpha}\ketbra{\chi_{\alpha}}{\chi_{\beta}} 
    + q\sum_i \bra{i}\sigma\ket{i}\ketbra{\chi_i}{\chi_i} \Big| \nonumber \\
    &\le& \frac{1-q}{d} \tr \Big| \sum_{\alpha \beta} \bra{\beta}\sigma\ket{\alpha}\ketbra{\chi_{\alpha}}{\chi_{\beta}}\Big| 
    + \frac{q}{d}\tr \Big|\sum_i \bra{i}\sigma\ket{i}\ketbra{\chi_i}{\chi_i} \Big| \nonumber \\
    &\le&\frac{1-q}{d} \tr\sqrt{\sum_{\substack{\alpha \beta  \\ \alpha' }}    
     \bra{\alpha'}\sigma\ket{\beta}\bra{\beta}\sigma\ket{\alpha} \ketbra{\chi_{\alpha}}{\chi_{\alpha'}}}  
    + \frac{q}{d}\tr\left(\sum_i \Big|\bra{i}\sigma\ket{i}\Big| \ketbra{\chi_i}{\chi_i} \right) \nonumber \\
    &=&  \frac{(1-q)}{d} \frac{r}{d} + \frac{q}{d} \sum_i \left|\bra{i}\sigma\ket{i}\right| \le \frac{r}{d}, 
\end{eqnarray}
where the first and second inequalities use the triangle law for the trace norm, and the third inequality is due to the fact that $|\langle i | \sigma|i\rangle|\leq 1$ for each $i$.
Whereas the quantity $\Delta_{\rho}$, is given by
\begin{eqnarray}
\label{d_r}
    \Delta_{\rho}&=&\frac{1}{d}\tr \left| \sum_{\alpha \beta} \bra{\beta}\sigma\ket{\alpha}\ketbra{\chi_{\alpha}}{\chi_{\beta}} \right| \nonumber \\
    &=& \frac{1}{d} \tr\left[\left( \sum_{\alpha \beta \alpha'} \bra{\alpha'}\sigma\ket{\beta}\bra{\beta}\sigma\ket{\alpha}\ketbra{\chi_{\alpha}}{\chi_{\alpha}'} \right)^{1/2}\right] \nonumber \\
    &=& \frac{1}{d} \tr\left[\left( \sum_{\alpha  \alpha'} \bra{\alpha'}\sigma^2\ket{\alpha}\ketbra{\chi_{\alpha}}{\chi_{\alpha}'} \right)^{1/2}\right] \nonumber \\
    &=& \frac{r}{d}. 
\end{eqnarray}
Therefore, from equations~\eqref{d_dr} and~\eqref{d_r}, we finally obtain $\Delta_{\mathcal{P}^q_A(\rho)}-\Delta_{\rho} \le 0$. 
\hfill $\blacksquare$

\twocolumngrid
\bibliography{references}

@article{Plenio2008,
  title={Dephasing-assisted transport: quantum networks and biomolecules},
  author={Plenio, Martin B and Huelga, Susana F},
  journal={New Journal of Physics},
  volume={10},
  number={11},
  pages={113019},
  year={2008},
  publisher={IOP Publishing},
    doi={10.1088/1367-2630/10/11/113019}
}

@article{Rebentrost2009,
  title={Environment-assisted quantum transport},
  author={Rebentrost, Patrick and Mohseni, Masoud and Kassal, Ivan and Lloyd, Seth and Aspuru-Guzik, Al{\'a}n},
  journal={New Journal of Physics},
  volume={11},
  number={3},
  pages={033003},
  year={2009},
  publisher={IOP Publishing},
  doi={10.1088/1367-2630/11/3/033003}
}

@article{Mohseni2008,
  title={Environment-assisted quantum walks in photosynthetic energy transfer},
  author={Mohseni, Masoud and Rebentrost, Patrick and Lloyd, Seth and Aspuru-Guzik, Alan},
  journal={The Journal of chemical physics},
  volume={129},
  number={17},
  year={2008},
  publisher={AIP Publishing},
url={https://pubs.aip.org/aip/jcp/article-abstract/129/17/174106/187931/Environment-assisted-quantum-walks-in?redirectedFrom=fulltext}
}

@article{Caruso2009,
  title={Highly efficient energy excitation transfer in light-harvesting complexes: The fundamental role of noise-assisted transport},
  author={Caruso, Filippo and Chin, Alex W and Datta, Animesh and Huelga, Susana F and Plenio, Martin B},
  journal={The Journal of Chemical Physics},
  volume={131},
  number={10},
  year={2009},
  publisher={AIP Publishing},
url={https://pubs.aip.org/aip/jcp/article-abstract/131/10/105106/906179/Highly-efficient-energy-excitation-transfer-in?redirectedFrom=fulltext}
}

@article{Verstraete2009,
  title={Quantum computation and quantum-state engineering driven by dissipation},
  author={Verstraete, Frank and Wolf, Michael M and Ignacio Cirac, J},
  journal={Nature physics},
  volume={5},
  number={9},
  pages={633--636},
  year={2009},
  publisher={Nature Publishing Group UK London},
url={https://www.nature.com/articles/nphys1342}
}

@article{krauter2011,
  title = {Entanglement Generated by Dissipation and Steady State Entanglement of Two Macroscopic Objects},
  author = {Krauter, Hanna and Muschik, Christine A. and Jensen, Kasper and Wasilewski, Wojciech and Petersen, Jonas M. and Cirac, J. Ignacio and Polzik, Eugene S.},
  journal = {Phys. Rev. Lett.},
  volume = {107},
  issue = {8},
  pages = {080503},
  numpages = {5},
  year = {2011},
  month = {Aug},
  publisher = {American Physical Society},
  doi = {10.1103/PhysRevLett.107.080503},
  url = {https://link.aps.org/doi/10.1103/PhysRevLett.107.080503}
}

@article{zurek2003,
  title = {Decoherence, einselection, and the quantum origins of the classical},
  author = {Zurek, Wojciech Hubert},
  journal = {Rev. Mod. Phys.},
  volume = {75},
  issue = {3},
  pages = {715--775},
  numpages = {0},
  year = {2003},
  month = {May},
  publisher = {American Physical Society},
  doi = {10.1103/RevModPhys.75.715},
  url = {https://link.aps.org/doi/10.1103/RevModPhys.75.715}
}

@book{schlosshauer2007,
  title={Decoherence and the quantum-to-classical transition},
  author={Schlosshauer, Maximilian},
  year={2007},
  publisher={Springer},
url={https://link.springer.com/book/10.1007/978-3-540-35775-9}
}

@article{Holevo12,
doi = {10.1088/0034-4885/75/4/046001},
url = {https://doi.org/10.1088/0034-4885/75/4/046001},
year = {2012},
month = {mar},
publisher = {},
volume = {75},
number = {4},
pages = {046001},
author = {Holevo, A S and Giovannetti, V},
title = {Quantum channels and their entropic characteristics},
journal = {Reports on Progress in Physics}
}

@article{horodecki09,
  title = {Quantum entanglement},
  author = {Horodecki, Ryszard and Horodecki, Pawe\l{} and Horodecki, Micha\l{} and Horodecki, Karol},
  journal = {Rev. Mod. Phys.},
  volume = {81},
  issue = {2},
  pages = {865--942},
  numpages = {0},
  year = {2009},
  month = {Jun},
  publisher = {American Physical Society},
  doi = {10.1103/RevModPhys.81.865},
  url = {https://link.aps.org/doi/10.1103/RevModPhys.81.865}
}

@article{Guhne09,
title = {Entanglement detection},
journal = {Physics Reports},
volume = {474},
number = {1},
pages = {1-75},
year = {2009},
issn = {0370-1573},
doi = {https://doi.org/10.1016/j.physrep.2009.02.004},
url = {https://www.sciencedirect.com/science/article/pii/S0370157309000623},
author = {Otfried Gühne and Géza Tóth}
}

@article{Ekert91,
  title = {Quantum cryptography based on Bell's theorem},
  author = {Ekert, Artur K.},
  journal = {Phys. Rev. Lett.},
  volume = {67},
  issue = {6},
  pages = {661--663},
  numpages = {0},
  year = {1991},
  month = {Aug},
  publisher = {American Physical Society},
  doi = {10.1103/PhysRevLett.67.661},
  url = {https://link.aps.org/doi/10.1103/PhysRevLett.67.661}
}

@article{GRTZ02,
  title = {Quantum cryptography},
  author = {Gisin, Nicolas and Ribordy, Gr\'egoire and Tittel, Wolfgang and Zbinden, Hugo},
  journal = {Rev. Mod. Phys.},
  volume = {74},
  issue = {1},
  pages = {145--195},
  numpages = {0},
  year = {2002},
  month = {Mar},
  publisher = {American Physical Society},
  doi = {10.1103/RevModPhys.74.145},
  url = {https://link.aps.org/doi/10.1103/RevModPhys.74.145}
}

@article{Teleportation,
  title = {Teleporting an unknown quantum state via dual classical and Einstein-Podolsky-Rosen channels},
  author = {Bennett, C. H. and Brassard, G. and Cr\'epeau, C. and Jozsa, R. and Peres, A. and Wootters, W. K.},
  journal = {Phys. Rev. Lett.},
  volume = {70},
  issue = {13},
  pages = {1895},
  numpages = {0},
  year = {1993},
  month = {Mar},
  publisher = {American Physical Society},
   url = {https://link.aps.org/doi/10.1103/PhysRevLett.70.1895}
}

@Article{Hu2023,
author={Hu, Xiao-Min
and Guo, Yu
and Liu, Bi-Heng
and Li, Chuan-Feng
and Guo, Guang-Can},
title={Progress in quantum teleportation},
journal={Nature Reviews Physics},
year={2023},
month={Jun},
day={01},
volume={5},
number={6},
pages={339-353},
issn={2522-5820},
doi={10.1038/s42254-023-00588-x},
url={https://doi.org/10.1038/s42254-023-00588-x}
}

@article{Densed1,
  title = {Communication via one- and two-particle operators on Einstein-Podolsky-Rosen states},
  author = {Bennett, C. H. and Wiesner, S. J.},
  journal = {Phys. Rev. Lett.},
  volume = {69},
  issue = {20},
  pages = {2881},
  numpages = {0},
  year = {1992},
  month = {Nov},
  publisher = {American Physical Society},
  url = {https://link.aps.org/doi/10.1103/PhysRevLett.69.2881}
}

@article{SBDS19,
  title = {One-shot conclusive multiport quantum dense coding capacities},
  author = {Srivastava, Chirag and Bera, Anindita and Sen(De), Aditi and Sen, Ujjwal},
  journal = {Phys. Rev. A},
  volume = {100},
  issue = {5},
  pages = {052304},
  numpages = {11},
  year = {2019},
  month = {Nov},
  publisher = {American Physical Society},
  doi = {10.1103/PhysRevA.100.052304},
  url = {https://link.aps.org/doi/10.1103/PhysRevA.100.052304}
}

@article{Guo2019,
author = {Guo, Yu and Liu, Bi-Heng and Li, Chuan-Feng and Guo, Guang-Can},
title = {Advances in Quantum Dense Coding},
journal = {Advanced Quantum Technologies},
volume = {2},
number = {5-6},
pages = {1900011},
doi = {https://doi.org/10.1002/qute.201900011},
url = {https://advanced.onlinelibrary.wiley.com/doi/abs/10.1002/qute.201900011},
year = {2019}
}

@Article{Pironio2010,
author={Pironio, S.
and Ac{\'i}n, A.
and Massar, S.
and de la Giroday, A. Boyer
and Matsukevich, D. N.
and Maunz, P.
and Olmschenk, S.
and Hayes, D.
and Luo, L.
and Manning, T. A.
and Monroe, C.},
title={Random numbers certified by Bell's theorem},
journal={Nature},
year={2010},
month={Apr},
day={01},
volume={464},
number={7291},
pages={1021-1024},
issn={1476-4687},
doi={10.1038/nature09008},
url={https://doi.org/10.1038/nature09008}
}

@article{Christensen2013,
  title = {Detection-Loophole-Free Test of Quantum Nonlocality, and Applications},
  author = {Christensen, B. G. and McCusker, K. T. and Altepeter, J. B. and Calkins, B. and Gerrits, T. and Lita, A. E. and Miller, A. and Shalm, L. K. and Zhang, Y. and Nam, S. W. and Brunner, N. and Lim, C. C. W. and Gisin, N. and Kwiat, P. G.},
  journal = {Phys. Rev. Lett.},
  volume = {111},
  issue = {13},
  pages = {130406},
  numpages = {5},
  year = {2013},
  month = {Sep},
  publisher = {American Physical Society},
  doi = {10.1103/PhysRevLett.111.130406},
  url = {https://link.aps.org/doi/10.1103/PhysRevLett.111.130406}
}

@article{Bell64,
  title = {On the {E}instein {P}odolsky {R}osen paradox},
  author = {Bell, J. S.},
  journal = {Physics Physique Fizika},
  volume = {1},
  issue = {3},
  pages = {195--200},
  numpages = {6},
  year = {1964},
  month = {Nov},
  publisher = {American Physical Society},
  doi = {10.1103/PhysicsPhysiqueFizika.1.195},
  url = {https://link.aps.org/doi/10.1103/PhysicsPhysiqueFizika.1.195}
}

@book{bell_aspect_2004, 
place={Cambridge}, 
edition={2}, 
title={Speakable and Unspeakable in Quantum Mechanics: Collected Papers on Quantum Philosophy}, 
publisher={Cambridge University Press}, 
author={Bell, J. S.}, 
year={2004}
}

@article{bell-review,
  title = {{B}ell nonlocality},
  author = {Brunner, N. and Cavalcanti, D. and Pironio, S. and Scarani, V. and Wehner, S.},
  journal = {Rev. Mod. Phys.},
  volume = {86},
  issue = {2},
  pages = {419},
  numpages = {60},
  year = {2014},
  month = {Apr},
  publisher = {American Physical Society},
  url = {https://link.aps.org/doi/10.1103/RevModPhys.86.419}
}

@article{Schrödinger_1935, title={Discussion of Probability Relations between Separated Systems}, volume={31}, DOI={10.1017/S0305004100013554}, number={4}, journal={Mathematical Proceedings of the Cambridge Philosophical Society}, author={Schrödinger, E.}, year={1935}, pages={555–563}}

@article{Schrödinger_1936, title={Probability relations between separated systems}, volume={32}, DOI={10.1017/S0305004100019137}, number={3}, journal={Mathematical Proceedings of the Cambridge Philosophical Society}, author={Schrödinger, E.}, year={1936}, pages={446–452}}

@article{Wiseman07,
  title = {Steering, Entanglement, Nonlocality, and the Einstein-Podolsky-Rosen Paradox},
  author = {Wiseman, H. M. and Jones, S. J. and Doherty, A. C.},
  journal = {Phys. Rev. Lett.},
  volume = {98},
  issue = {14},
  pages = {140402},
  numpages = {4},
  year = {2007},
  month = {Apr},
  publisher = {American Physical Society},
  doi = {10.1103/PhysRevLett.98.140402},
  url = {https://link.aps.org/doi/10.1103/PhysRevLett.98.140402}
}

@article{Uola20,
  title = {Quantum steering},
  author = {Uola, Roope and Costa, Ana C. S. and Nguyen, H. Chau and G\"uhne, Otfried},
  journal = {Rev. Mod. Phys.},
  volume = {92},
  issue = {1},
  pages = {015001},
  numpages = {40},
  year = {2020},
  month = {Mar},
  publisher = {American Physical Society},
  doi = {10.1103/RevModPhys.92.015001},
  url = {https://link.aps.org/doi/10.1103/RevModPhys.92.015001}
}

@article{Me2,
  title = {Finding optimal strategies for minimum-error quantum-state discrimination},
  author = {Je\ifmmode \check{z}\else \v{z}\fi{}ek, M. and \ifmmode \check{R}\else \v{R}\fi{}eh\'a\ifmmode \check{c}\else \v{c}\fi{}ek, J. and Fiur\'a\ifmmode \check{s}\else \v{s}\fi{}ek, J.},
  journal = {Phys. Rev. A},
  volume = {65},
  issue = {6},
  pages = {060301},
  year = {2002},
  month = {Jun},
  publisher = {American Physical Society},
  url = {https://link.aps.org/doi/10.1103/PhysRevA.65.060301}
}

@article{Me4,
  title = {Distinguishing mixed quantum states: Minimum-error discrimination versus optimum unambiguous discrimination},
  author = {Herzog, U. and Bergou, J. A.},
  journal = {Phys. Rev. A},
  volume = {70},
  pages = {022302},
  year = {2004},
  month = {Aug},
  publisher = {American Physical Society},
  url = {https://link.aps.org/doi/10.1103/PhysRevA.70.022302}
}

@article{Me6,
  title={On the conditions for discrimination between quantum states with minimum error},
  author={Barnett, S. M.  and Croke, S. },
  journal={arXiv preprint arXiv:0810.1919},
  year={2008},
url={
https://doi.org/10.1088/1751-8113/42/6/062001
}
}

@article{Me7,
  title = {Minimum-error discrimination of quantum states: Bounds and comparisons},
  author = {Qiu, D. and Li, L.},
  journal = {Phys. Rev. A},
  volume = {81},
  pages = {042329},
  year = {2010},
  month = {Apr},
  publisher = {American Physical Society},
  doi = {10.1103/PhysRevA.81.042329},
  url = {https://link.aps.org/doi/10.1103/PhysRevA.81.042329}
}

@article{Me12,
  title={Structure of minimum-error quantum state discrimination},
  author={Bae, J. },
  journal={New Journal of Physics},
  volume={15},
  number={7},
  pages={073037},
  year={2013},
  publisher={IOP Publishing},
url={
https://doi.org/10.1088/1367-2630/15/7/073037
Focus to learn more
}
}

@article{Bae_2015,
doi = {10.1088/1751-8113/48/8/083001},
url = {https://doi.org/10.1088/1751-8113/48/8/083001},
year = {2015},
month = {jan},
publisher = {IOP Publishing},
volume = {48},
number = {8},
pages = {083001},
author = {Bae, Joonwoo and Kwek, Leong-Chuan},
title = {Quantum state discrimination and its applications},
journal = {Journal of Physics A: Mathematical and Theoretical}
}

@article{Me16,
  title = {Minimal-error quantum state discrimination versus robustness of entanglement: More indistinguishability with less entanglement},
  author = {Saha, Debarupa and Sen, Kornikar and Srivastava, Chirag and Sen, Ujjwal},
  journal = {Phys. Rev. A},
  volume = {112},
  issue = {6},
  pages = {062441},
  numpages = {16},
  year = {2025},
  month = {Dec},
  publisher = {American Physical Society},
  doi = {10.1103/dy3b-lyw8},
  url = {https://link.aps.org/doi/10.1103/dy3b-lyw8}
}

@article{helstrom1,
  title={Quantum detection and estimation theory},
  author={Carl W. Helstrom},
  journal={Journal of Statistical Physics},
  year={1969},
  volume={1},
  pages={231-252},
  url={https://api.semanticscholar.org/CorpusID:12758217}
}

@article{Halder20,
  title = {Locally distinguishing quantum states with limited classical communication},
  author = {Halder, Saronath and Srivastava, Chirag},
  journal = {Phys. Rev. A},
  volume = {101},
  issue = {5},
  pages = {052313},
  numpages = {7},
  year = {2020},
  month = {May},
  publisher = {American Physical Society},
  doi = {10.1103/PhysRevA.101.052313},
  url = {https://link.aps.org/doi/10.1103/PhysRevA.101.052313}
}

@book{Breuer07,
  title = {The Theory of Open Quantum Systems},
  ISBN = {9780191706349},
  url = {http://dx.doi.org/10.1093/acprof:oso/9780199213900.001.0001},
  DOI = {10.1093/acprof:oso/9780199213900.001.0001},
  publisher = {Oxford University PressOxford},
  author = {Breuer,  Heinz-Peter and Petruccione,  Francesco},
  year = {2007},
  month = jan 
}

@book{Wiseman09, place={Cambridge}, title={Quantum Measurement and Control}, publisher={Cambridge University Press}, author={Wiseman, Howard M. and Milburn, Gerard J.}, year={2009}}

@article{Devetak05,
  title = {The Capacity of a Quantum Channel for Simultaneous Transmission of Classical and Quantum Information},
  volume = {256},
  ISSN = {1432-0916},
  url = {http://dx.doi.org/10.1007/s00220-005-1317-6},
  DOI = {10.1007/s00220-005-1317-6},
  number = {2},
  journal = {Communications in Mathematical Physics},
  publisher = {Springer Science and Business Media LLC},
  author = {Devetak,  I. and Shor,  P. W.},
  year = {2005},
  month = mar,
  pages = {287–303}
}

@article{Silva15,
  title = {Multiple Observers Can Share the Nonlocality of Half of an Entangled Pair by Using Optimal Weak Measurements},
  author = {Silva, Ralph and Gisin, Nicolas and Guryanova, Yelena and Popescu, Sandu},
  journal = {Phys. Rev. Lett.},
  volume = {114},
  issue = {25},
  pages = {250401},
  numpages = {5},
  year = {2015},
  month = {Jun},
  publisher = {American Physical Society},
  doi = {10.1103/PhysRevLett.114.250401},
  url = {https://link.aps.org/doi/10.1103/PhysRevLett.114.250401}
}

@article{Brown20,
  title = {Arbitrarily Many Independent Observers Can Share the Nonlocality of a Single Maximally Entangled Qubit Pair},
  author = {Brown, Peter J. and Colbeck, Roger},
  journal = {Phys. Rev. Lett.},
  volume = {125},
  issue = {9},
  pages = {090401},
  numpages = {5},
  year = {2020},
  month = {Aug},
  publisher = {American Physical Society},
  doi = {10.1103/PhysRevLett.125.090401},
  url = {https://link.aps.org/doi/10.1103/PhysRevLett.125.090401}
}

@InProceedings{Curchod17,
  author =	{Curchod, Florian J. and Johansson, Markus and Augusiak, Remigiusz and Hoban, Matty J. and Wittek, Peter and Ac{\'\i}n, Antonio},
  title =	{{A Single Entangled System Is an Unbounded Source of Nonlocal Correlations and of Certified Random Numbers}},
  booktitle =	{12th Conference on the Theory of Quantum Computation, Communication and Cryptography (TQC 2017)},
  pages =	{1:1--1:23},
  series =	{Leibniz International Proceedings in Informatics (LIPIcs)},
  ISBN =	{978-3-95977-034-7},
  ISSN =	{1868-8969},
  year =	{2018},
  volume =	{73},
  editor =	{Wilde, Mark M.},
  publisher =	{Schloss Dagstuhl -- Leibniz-Zentrum f{\"u}r Informatik},
  address =	{Dagstuhl, Germany},
  URL =		{https://drops.dagstuhl.de/entities/document/10.4230/LIPIcs.TQC.2017.1},
  URN =		{urn:nbn:de:0030-drops-85809},
  doi =		{10.4230/LIPIcs.TQC.2017.1}
}

@article{SPS22,
  title = {Entanglement witnessing by arbitrarily many independent observers recycling a local quantum shared state},
  author = {Srivastava, Chirag and Pandit, Mahasweta and Sen, Ujjwal},
  journal = {Phys. Rev. A},
  volume = {105},
  issue = {6},
  pages = {062413},
  numpages = {6},
  year = {2022},
  month = {Jun},
  publisher = {American Physical Society},
  doi = {10.1103/PhysRevA.105.062413},
  url = {https://link.aps.org/doi/10.1103/PhysRevA.105.062413}
}

@article{PSS22,
  title = {Recycled entanglement detection by arbitrarily many sequential and independent pairs of observers},
  author = {Pandit, Mahasweta and Srivastava, Chirag and Sen, Ujjwal},
  journal = {Phys. Rev. A},
  volume = {106},
  issue = {3},
  pages = {032419},
  numpages = {6},
  year = {2022},
  month = {Sep},
  publisher = {American Physical Society},
  doi = {10.1103/PhysRevA.106.032419},
  url = {https://link.aps.org/doi/10.1103/PhysRevA.106.032419}
}

@article{SPS25,
  title = {Unbounded recycling of genuine multiparty entanglement for any number of qubits},
  author = {Srivastava, Chirag and Pandit, Mahasweta and Sen, Ujjwal},
  journal = {Phys. Rev. A},
  volume = {111},
  issue = {1},
  pages = {012413},
  numpages = {12},
  year = {2025},
  month = {Jan},
  publisher = {American Physical Society},
  doi = {10.1103/PhysRevA.111.012413},
  url = {https://link.aps.org/doi/10.1103/PhysRevA.111.012413}
}

@article{MSSS25,
  title = {Local entanglement transfer from an entanglement source to multiple pairs of spatially separated observers},
  author = {Mondal, Tanmoy and Sen, Kornikar and Srivastava, Chirag and Sen, Ujjwal},
  journal = {Phys. Rev. A},
  volume = {112},
  issue = {1},
  pages = {L010402},
  numpages = {6},
  year = {2025},
  month = {Jul},
  publisher = {American Physical Society},
  doi = {10.1103/17s4-rtdm},
  url = {https://link.aps.org/doi/10.1103/17s4-rtdm}
}

@article{Branciard10,
  title = {Characterizing the Nonlocal Correlations Created via Entanglement Swapping},
  author = {Branciard, C. and Gisin, N. and Pironio, S.},
  journal = {Phys. Rev. Lett.},
  volume = {104},
  issue = {17},
  pages = {170401},
  numpages = {4},
  year = {2010},
  month = {Apr},
  publisher = {American Physical Society},
  doi = {10.1103/PhysRevLett.104.170401},
  url = {https://link.aps.org/doi/10.1103/PhysRevLett.104.170401}
}

@article{Branciard12,
  title = {Bilocal versus nonbilocal correlations in entanglement-swapping experiments},
  author = {Branciard, Cyril and Rosset, Denis and Gisin, Nicolas and Pironio, Stefano},
  journal = {Phys. Rev. A},
  volume = {85},
  issue = {3},
  pages = {032119},
  numpages = {21},
  year = {2012},
  month = {Mar},
  publisher = {American Physical Society},
  doi = {10.1103/PhysRevA.85.032119},
  url = {https://link.aps.org/doi/10.1103/PhysRevA.85.032119}
}

@article{Fritz12,
doi = {10.1088/1367-2630/14/10/103001},
url = {https://doi.org/10.1088/1367-2630/14/10/103001},
year = {2012},
month = {oct},
publisher = {IOP Publishing},
volume = {14},
number = {10},
pages = {103001},
author = {Fritz, Tobias},
title = {Beyond Bell's theorem: correlation scenarios},
journal = {New Journal of Physics}
}

@article{Boreiri23,
  title = {Towards a minimal example of quantum nonlocality without inputs},
  author = {Boreiri, Sadra and Girardin, Antoine and Ulu, Bora and Lipka-Bartosik, Patryk and Brunner, Nicolas and Sekatski, Pavel},
  journal = {Phys. Rev. A},
  volume = {107},
  issue = {6},
  pages = {062413},
  numpages = {12},
  year = {2023},
  month = {Jun},
  publisher = {American Physical Society},
  doi = {10.1103/PhysRevA.107.062413},
  url = {https://link.aps.org/doi/10.1103/PhysRevA.107.062413}
}

@article{Berta10,
  title = {The uncertainty principle in the presence of quantum memory},
  volume = {6},
  ISSN = {1745-2481},
  url = {http://dx.doi.org/10.1038/nphys1734},
  DOI = {10.1038/nphys1734},
  number = {9},
  journal = {Nature Physics},
  publisher = {Springer Science and Business Media LLC},
  author = {Berta,  Mario and Christandl,  Matthias and Colbeck,  Roger and Renes,  Joseph M. and Renner,  Renato},
  year = {2010},
  month = July,
  pages = {659–662}
}

@article{GREENBERGER88,
title = {Simultaneous wave and particle knowledge in a neutron interferometer},
journal = {Physics Letters A},
volume = {128},
number = {8},
pages = {391-394},
year = {1988},
issn = {0375-9601},
doi = {https://doi.org/10.1016/0375-9601(88)90114-4},
url = {https://www.sciencedirect.com/science/article/pii/0375960188901144},
author = {Daniel M. Greenberger and Allaine Yasin}
}

@article{Basso2022,
  title = {Predictability as a quantum resource},
  volume = {21},
  ISSN = {1573-1332},
  url = {http://dx.doi.org/10.1007/s11128-022-03503-y},
  DOI = {10.1007/s11128-022-03503-y},
  number = {5},
  journal = {Quantum Information Processing},
  publisher = {Springer Science and Business Media LLC},
  author = {Basso,  Marcos L. W. and Maziero,  Jonas},
  year = {2022},
  month = May 
}

@article{Martinez26,
  title = {From top quarks to enhanced quantum key distribution: A framework for optimal predictability of quantum observables},
  author = {Mart\'{\i}nez-Moreno, Dennis I. and Castillo-Celeita, Miguel and Bussandri, Diego G.},
  journal = {Phys. Rev. Appl.},
  volume = {25},
  issue = {1},
  pages = {014063},
  numpages = {21},
  year = {2026},
  month = {Jan},
  publisher = {American Physical Society},
  doi = {10.1103/vw1p-tzdn},
  url = {https://link.aps.org/doi/10.1103/vw1p-tzdn}
}

@article{Matthews10,
  title = {Entanglement in channel discrimination with restricted measurements},
  author = {Matthews, William and Piani, Marco and Watrous, John},
  journal = {Phys. Rev. A},
  volume = {82},
  issue = {3},
  pages = {032302},
  numpages = {8},
  year = {2010},
  month = {Sep},
  publisher = {American Physical Society},
  doi = {10.1103/PhysRevA.82.032302},
  url = {https://link.aps.org/doi/10.1103/PhysRevA.82.032302}
}

@article{Piani15,
  title = {Necessary and Sufficient Quantum Information Characterization of Einstein-Podolsky-Rosen Steering},
  author = {Piani, Marco and Watrous, John},
  journal = {Phys. Rev. Lett.},
  volume = {114},
  issue = {6},
  pages = {060404},
  numpages = {6},
  year = {2015},
  month = {Feb},
  publisher = {American Physical Society},
  doi = {10.1103/PhysRevLett.114.060404},
  url = {https://link.aps.org/doi/10.1103/PhysRevLett.114.060404}
}

\end{document}